\documentclass[11pt]{llncs}


\usepackage{amssymb}
\usepackage{amsmath}
\usepackage{amsthm}
\usepackage{textcomp}
\usepackage{array}
\usepackage{multirow}
\usepackage{comment}

\usepackage{color}
\usepackage[left=3cm,top=3cm,right=3cm,bottom=3cm]{geometry}


\setlength{\tabcolsep}{1px}


\newcommand{\bra}[1]{\langle #1|}
\newcommand{\ket}[1]{|#1\rangle}

\newcommand{\cent}[0]{\mbox{\textcent}}
\newcommand{\dollar}[0]{\$}

\newcommand{\lrhd}[0]{ \lhd \mspace{-3mu} \rhd }

\newtheorem{fact}{Fact}

\title{Proving the power of postselection\thanks{Earlier versions of this paper appeared as \cite{YS10D,YS11B}}}
\author{Abuzer Yakary{\i}lmaz$ ^{\mbox{\tiny 1,}} $\thanks{Yakary{\i}lmaz was partially supported by the Scientific and Technological Research Council of Turkey (T\"{U}B\.ITAK) with grant 108E142 and FP7 FET-Open project QCS.}
\and
A.C. Cem Say$ ^{\mbox{\tiny 2,}} $\thanks{Say's work was partially supported by the Scientific and Technological
Research Council of Turkey (T\"{U}B\.ITAK) with grant 108E142.}}

\institute{$ ^{\mbox{\tiny 1}} $University of Latvia, Faculty of Computing, Raina bulv. 19, Riga, LV-1586, Latvia \\
\email{abuzer@lu.lv} ~~ \\
$ ^{\mbox{\tiny 2}} $Bo\u{g}azi\c{c}i University, Department of Computer Engineering, Bebek 34342 \.{I}stanbul, Turkey
\\ \email{say@boun.edu.tr} ~~
 \\~~\\
\today
}

\begin{document}

\newlength{\twidth}
\maketitle
\pagenumbering{arabic}

\begin{abstract} \label{abstract:Abstract}
It is a widely believed, though unproven, conjecture that the capability of postselection increases the language recognition power of both probabilistic and quantum polynomial-time computers. It is also unknown whether polynomial-time quantum machines with postselection are more powerful than their probabilistic counterparts with the same resource restrictions. We approach these problems by imposing additional constraints on the resources to be used by the computer, and are able to prove for the first time that postselection does augment the computational power of both classical and quantum computers, and that quantum does outperform probabilistic in this context, under simultaneous time and space bounds in a certain range. We also look at postselected versions of space-bounded classes, as well as those corresponding to  error-free and one-sided error recognition, and provide classical characterizations. It is shown that $\mathsf{NL}$ would equal $\mathsf{RL}$ if the randomized machines had the postselection capability.
\\ \\
\textbf{keywords:}
postselection, quantum Turing machines, probabilistic Turing machines, space-bounded computation,
	one-sided error, zero error
\end{abstract}

\section{Introduction} \label{sec:Introduction}

The notion of postselection as a mode of computation was introduced by
Aaronson \cite{Aa05}. Postselection is the (unrealistic) capability
of discarding all branches of a computation in which a specific event
does not occur, and focusing on the surviving branches for the final
decision about the membership of the input string in the recognized
language. Aaronson examined $\mathsf{PostBQP}$, the class of languages recognized with bounded error
by polynomial-time quantum computers with postselection, and showed it
to be identical to the well-known classical complexity class $\mathsf{PP}$. The corresponding class for probabilistic polynomial-time computers with postselection is known to equal $\mathsf{BPP_{path}}$. It is, however, still an open question whether postselection adds anything to the power of quantum or classical polynomial-time computation, since we do not know whether the standard classes (without postselection) for these models, that is, $\mathsf{BQP}$ and $\mathsf{BPP}$, respectively, equal their postselected versions or not. It is also not known whether $\mathsf{BPP_{path}}=\mathsf{PP}$, or polynomial-time quantum machines with postselection outperform their classical counterparts, as is conjectured to be the case between the standard versions. All these classes sit between $\mathsf{P}$ and $\mathsf{PSPACE}$, and any proof of unequality between them would therefore be as hard as proving $\mathsf{P}\neq \mathsf{PSPACE}$. 

In this paper, we approach the problem of evaluating the power of postselection by imposing more resource restrictions on the computer. We demonstrate certain simultaneous space and time bounds, under which it is proven that postselection increases the power of both probabilistic and quantum machines, and that quantum computers with this capability outperform classical ones. More precisely, we prove that polynomial-time and $o(\log \log n)$-space probabilistic computers with postselection are more powerful than their standard versions. To handle  quantum machines, we constrain the time bound further to allow only real-time computation, where the computer has just time enough for a single left-to-right scan of the input. Under this restriction, we prove that quantum computers with postselection that are allowed to use only $O(1)$ space outperform those without postselection, and that probabilistic machines with postselection 
are inferior to quantum ones for any common space bound that is sublogarithmic.

We also examine  postselected versions of classes that are only space-bounded, as well as those corresponding to  error-free and one-sided error recognition, and provide several classical characterizations. It turns out that postselection adds nothing to purely space-bounded machines, and the 
error-free and one-sided error classes for probabilistic machines with postselection for any space bound $s(n)=\Omega(\log n)$ are identical to each other, and to $\mathsf{NSPACE}(s(n))$. We show that allowing one-sided bounded error to be committed by a small-space quantum machine with postselection enlarges the class of recognized languages, in contrast to the probabilistic case. We also show that $\mathsf{NL}$ would equal $\mathsf{RL}$ if the randomized machines had the postselection capability.

Aaronson's definition is based on quantum circuits, whereas we use a model of postselection based on Turing machines (TMs), which facilitates dealing with space bounds. L\={a}ce \textit{et al}.  were the first to use machines, rather than circuits, for modeling postselection, in their groundbreaking work \cite{LSF09} on the effect of this capability on quantum finite
automata. We show that an idiosyncratic feature in their model makes it strictly more powerful than ours, which is more in alignment with  Aaronson's original
definition.

A key tool used in our proofs is the \textit{Turing machine with restart}, which is simply an ordinary TM that has been augmented with the capability of resetting itself to the initial configuration in a single step. We demonstrate that this restart action is equivalent in effect to postselection. The model of TM with restart is obtained by generalizing the
recently introduced real-time finite automata with restart \cite{YS10B}.

The rest of the paper is structured as follows: Section \ref{sec:Prel} recalls the standard definitions of probabilistic and quantum machines, and introduces the notation that will be used later. Machines with restart are defined in Section \ref{sec:restart}. Section \ref{sec:Posdefs} contains our definition of postselection. Our proofs of the superiority of machines with postselection over their standard versions, and that of the quantum variant over the classical one, are presented in Section \ref{sec:powerofpost}. Characterizations of classes of languages recognized by these machines exactly, or with one-sided error, can be found in Section \ref{sec:errorfree}. The model of L\={a}ce \textit{et al}. is compared with ours in Section \ref{sec:LPostFA}. Section \ref{sec:ConcludingRemarks} is a conclusion. Additional details on some of the discussed points are provided in the Appendices.

\section{Preliminaries} \label{sec:Prel}

Since our discussion will involve both space and time complexity issues, we will use resource-bounded
versions of Turing machines  as our models of computation. We now present quick definitions of standard probabilistic and quantum TMs, which will serve as templates for the new models to be introduced in the subsequent sections. We assume some familiarity with probabilistic and quantum computation, and the reader is referred to \cite{YS11A}  for a wider coverage of the basics.

All TMs we will consider have a read-only input tape, which contains the input string, sandwiched between the two special end-marker symbols $ \cent $ and $ \dollar $. The two-way input tape head is initially positioned on the left end-marker $ \cent $, and is not allowed to leave the area delimited by the end-markers. Space complexity is measured by the maximum number of cells that are ever visited with non-zero probability (or amplitude, in the quantum case) by the read/write head of the single work tape of the machine, as a function of the length of the input string. A configuration of a TM is a collection of its internal state, positions of the input and work tape heads, and  the contents of the work tape.

A probabilistic Turing machine (PTM) is a 6-tuple
\begin{equation*}
      \mathcal{P}=(Q,\Sigma,\Gamma,\delta,q_{1},\Delta),
\end{equation*}
where $Q$, $\Sigma$, $\Gamma$, and $ q_{1} $ denote the set of internal states, the input alphabet,  the work tape alphabet, and the initial state, respectively.

The transition function $ \delta $ is specified such that
\begin{equation*}
       \delta(q,\sigma,\gamma,q^{\prime},d_{i},\gamma^{\prime},d_{w}) \in \tilde{\mathbb{R}}
\end{equation*}
is the probability that the PTM will change its internal state to $
q^{\prime} $, write $ \gamma^{\prime} $ on the work tape,
and update the positions of the
input and work tape heads with respect to $ d_{i} $ and $ d_{w} $,
respectively,
where $ d_{i},d_{w} \in \lrhd = \{left,right,stationary\} $,  if it
scans $ \sigma $ and $ \gamma $ on the input and work tapes,
respectively, when originally in internal state $ q $. $ \tilde{\mathbb{R} } $
is the set consisting of $ p \in \mathbb{R} $ such that there
is a deterministic
algorithm that computes $ p $ to within $2^{-n}$ in time polynomial in $n$.

For each input string $ w \in \Sigma^{*} $, $ \delta $ 
defines a unique configuration transition matrix, $ A^{w}
$.
A PTM is \textit{well-formed}  if all columns of $ A^{w} $ are stochastic vectors.
This constraint defines the following local conditions for PTM
well-formedness
that $ \delta $ must obey:
For each $ q \in Q $, $ \sigma \in \tilde{\Sigma}  = \Sigma \cup \{ \cent, \dollar \} $,  and $ \gamma \in \Gamma $,
\begin{equation*}
      \sum_{q^{\prime},d_{i},\gamma^{\prime},d_{w}}
\delta(q,\sigma,\gamma,q^{\prime},d_{i},\gamma^{\prime},d_{w}) = 1,
\end{equation*}
where $ q^{\prime} \in Q $, $ \gamma^{\prime} \in \Gamma $, and $ d_{i},d_{w} \in \lrhd$.

In order to make the presentation of the essential differences among the various machine models to be discussed in the paper easier, we decree that all TM definitions include the item  $ \Delta  = \{ \tau_1, \ldots, \tau_k \} $, which is the set of ``move outcomes," that summarize the overall condition of the computation after each  step. In standard (probabilistic and quantum) TMs, $ \Delta = \{ c,a,r \} $, and $Q$, the usual finite set of internal states of the machine, is partitioned to three corresponding subsets 
$ Q_c $, $ Q_a $, and $ Q_r $, called the sets of continuing (non-halting), accepting, and rejecting  states, respectively.
The computation is terminated, and the input is 
accepted (resp., rejected) if the TM enters a state belonging to $ Q_{a} $ (resp., $ Q_{r} $). The machine continues with the next move otherwise.
We use the name $ Q_{h} $ (the set of halting states) to refer to all states at which the computation is terminated, and we naturally have 
$ Q_{h} = Q_{a} \cup Q_{r} $ for all standard TM variants.

For ease in the modeling of space-efficient quantum computation, our quantum Turing machines (QTMs) are assumed to contain an additional component, namely, a finite register, which is used in the observation of the move outcomes. The set of different values that this register can contain is denoted by $ \Omega $, and is also partitioned into $ |\Delta| $ subsets 
($\Omega_{c} $, $ \Omega_{a}$, and $\Omega_{r} $ in the case of standard QTMs), corresponding to the different types of move outcomes.
A QTM is then defined as a 7-tuple
\begin{equation*}
      \mathcal{M}=(Q,\Sigma,\Gamma,\Omega,\delta,q_{1},\Delta),
\end{equation*}
where $\Omega_{c}$ is required to contain a special \textit{initial symbol} $ \omega_{1} $, and  new conditions to be described below are imposed on the transition function $\delta$.

The transition function of a QTM is specified so that
\begin{equation*}
      \delta(q,\sigma,\gamma,q^{\prime},d_{i},\gamma^{\prime},d_{w},\omega)
\in \tilde{\mathbb{R}}
\end{equation*}
is the amplitude with which the QTM will change its internal state to
$ q^{\prime} $, write $ \gamma^{\prime} $ on the work tape
and $ \omega $ in the finite register, and update the positions of the
input and work tape heads with respect to $ d_{i} $ and $ d_{w} $,
respectively,
where $ d_{i},d_{w} \in \lrhd $, if it scans $ \sigma $ and $ \gamma $ on the
input and work tapes, respectively, when originally in internal state
$ q $. (The finite register always contains $ \omega_{1}$ at the beginning of every move.) 
 
After each transition, the finite register is measured to see which one of the sets $ \Omega_{a}$, $\Omega_{r} $, or $\Omega_{c} $ the current register symbol belongs to, and the
following actions are associated with the measurement outcomes:
\begin{itemize}
      \item ``$c$": the computation continues;
      \item ``$a$": the computation halts, and the input is accepted;
      \item ``$r$": the computation halts, and the input is rejected.
\end{itemize}
The finite register is irreversibly reinitialized to $ \omega_{1}$ before the next transition takes place.

Any superiority that quantum computers have over their probabilistic counterparts can be traced to the fact that machine configurations can have negative as well as positive amplitudes, sometimes allowing parallel computational branches to interfere with each other in a way that is impossible in classical computation. The amount with which a particular symbol will contribute to the probability of measurement of the associated outcome is the modulus squared of the corresponding amplitude at the time of observation. Appendix \ref{app:wellform} contains a description of the well-formedness conditions that quantum machines must satisfy.

Any sufficiently general quantum model can simulate the corresponding probabilistic model (subject to the same space and time restrictions) exactly, with essentially no overhead \cite{Wa09,YS11A}. So the question faced when comparing such a pair of models is always whether they are equivalent in power, or the quantum version can outperform the probabilistic one.

We will examine the language recognition of different types of machines under several error regimes. The terminology to be used in this regard is summarized below. 

The language $ L \subseteq \Sigma^{*} $ recognized by machine $
\mathcal{M} $ with (\textit{strict}) \textit{cutpoint}
$ \lambda \in \mathbb{R} $ is defined as
\begin{equation*}
       L = \{ w \in \Sigma^{*} \mid \mbox{P(}{\mathcal{M}} \mbox{ accepts } w) > \lambda \}.
\end{equation*}

The language $ L \subseteq \Sigma^{*} $ recognized by machine $
\mathcal{M} $ with \textit{nonstrict cutpoint}
$ \lambda \in \mathbb{R} $ is defined as \cite{BJKP05}
\begin{equation*}
       L = \{ w \in \Sigma^{*} \mid \mbox{P(}{\mathcal{M}} \mbox{ accepts } w)\geq \lambda \}.
\end{equation*}

The two cases described above comprise recognition with (\textit{two-sided}) \textit{unbounded error},
where every member of the recognized language $L$ is accepted
with a probability greater than every nonmember of $L$.

Machines that recognize a language with cutpoint 0, i.e. those that accept a string with nonzero probability if and only if that string is a member of the language, are said to be \textit{nondeterministic}.

The language $ L \subseteq \Sigma^{*} $ is said to be recognized by machine $
\mathcal{M} $ with (\textit{two-sided}) \textit{bounded error} if there exists an error bound $ \epsilon $
($ 0 \le \epsilon < \frac{1}{2} $) such that 
\begin{itemize}
       \item  P($\mathcal{M}$ accepts $w$) $\ge 1 - \epsilon $ for all $ w \in L $, and,
       \item P($\mathcal{M}$ rejects $w$) $\ge 1 - \epsilon $ for all $ w \notin L $.
\end{itemize}
Recognition with\textit{ one-sided bounded error} is defined as recognition by a nondeterministic machine with bounded error. A bounded-error machine whose error bound equals 0 is said to be performing \textit{error-free} (or \textit{exact}) computation.

Table \ref{tbl:std-class} lists some of the  language classes that we will mention. Most of the terminology here is standard, but our definition of the $\mathsf{EQSPACE}$ classes is different from that of Watrous, who used this designation for the first time \cite{Wa99} in terms of QTMs that were allowed to fail to halt with probability 1 for input strings that are not members of the language to be recognized. Our definition, paralleling the definition of $\mathsf{EQTime}$ in \cite{BV97}, corresponds to the class $\mathsf{EQ_{AS}SPACE}$ in \cite{Wa99}.  The reader should also note that the $\mathsf{EPTIME}(t)$ classes in Table \ref{tbl:std-class} clearly equal $\mathsf{TIME}(t)$ for any $t$; we include the name here only because the postselected version of these classes will be studied in Section \ref{sec:errorfree}.

\begin{table}[h!]
\caption{The standard classes of languages recognized by PTMs and QTMs}
\centering
\fbox{
\small
\begin{minipage}{0.97\textwidth}
\[
	\begin{array}{lccccc}
		\multicolumn{6}{c}{\textit{space-bounded classes}}
		\\
		\mbox{\underline{machine type}}  & \mbox{\underline{unbounded-error}} 
		& \mbox{\underline{nondeterministic}} &  \mbox{\underline{bounded-error}}
		&  \mbox{\underline{one-sided bounded-error}}
		&  \mbox{\underline{error-free}}
		\\
		\mbox{PTM} & \mathsf{PrSPACE} & \mathsf{NSPACE} & \mathsf{BPSPACE} & \mathsf{RSPACE} & \mathsf{EPSPACE}
		\\
		\mbox{QTM} & \mathsf{PrQSPACE} & \mathsf{NQSPACE} & \mathsf{BQSPACE} & \mathsf{RQSPACE} & \mathsf{EQSPACE}
		\\
		\multicolumn{6}{c}{\textit{time-bounded classes}}
		\\
		\mbox{\underline{machine type}}  & \mbox{\underline{unbounded-error}} 
		& \mbox{\underline{nondeterministic}} &  \mbox{\underline{bounded-error}}
		&  \mbox{\underline{one-sided bounded-error}}
		&  \mbox{\underline{error-free}}
		\\
		\mbox{PTM} & \mathsf{PrTIME} & \mathsf{NTIME} & \mathsf{BPTIME} & \mathsf{RTIME} & \mathsf{EPTIME}
		\\
		\mbox{QTM} & \mathsf{PrQTIME} & \mathsf{NQTIME} & \mathsf{BQTIME}  & \mathsf{RQTIME} & \mathsf{EQTIME}
	\end{array}
\]
\end{minipage}
}	
\label{tbl:std-class}
\end{table}

Simultaneous time-space bounds   for nondeterministic, probabilistic, and quantum machines have been studied in, for instance, \cite{BM80,DM06,MW08}, respectively, from which we take the following class definitions:

$\mathsf{BPTISP}(t,s)$, $\mathsf{RTISP}(t,s)$, and $\mathsf{NTISP}(t,s)$ are the classes of languages recognized with (two-sided) bounded error, one-sided bounded error, and with cutpoint zero, respectively,  by PTMs running in time $t$, and using space $s$. $\mathsf{BQTISP}$ is the quantum counterpart of $\mathsf{BPTISP}$. We define $\mathsf{PrTISP}$ and $\mathsf{PrQTISP}$ to be the unbounded-error counterparts of $\mathsf{BPTISP}$ and $\mathsf{BQTISP}$. 

We should clarify a potential source of confusion about randomized space-bounded classes \cite{Sa96}. $\mathsf{RSPACE}(s)$, which is defined to be the class of languages recognized with  one-sided error by  PTMs using space $s$, turns out to be identical to $\mathsf{NSPACE}(s)$, i.e. the class of languages recognized by nondeterministic TMs that use space $s$ \cite{Gi77}. The designation $\mathsf{RL}$, which was originally a shorthand for $\mathsf{RSPACE}(\log n)$, is now used to denote the more interesting class of languages that are recognized with positive one-sided error by logspace PTMs with polynomial time bounds \cite{Go08}, that is, $\mathsf{RTISP}(poly(n),\log n)$.

Some of our results involve \textit{real-time} machines, i.e., those that are restricted to move the input head to the right in every step of the computation, forcing them to have a runtime of $n+2$ steps, where $n$ is the length of the input.\footnote{It is well-known \cite{Ra63B,Aa74} that increasing the number of work tapes increases the language recognition power of standard versions of real-time machines. Our models have a single work tape, but all our results regarding real-time machines with postselection remain valid when multiple work tapes are allowed.} Inspired by Bruda's definition \cite{Br02} of the  $\mathsf{rt\mbox{-}SPACE}(s)$ classes, we define 
\textsc{rt}$\mathsf{BPSPACE}(s)$ as the class of languages recognized with (two-sided) bounded error 
by real-time PTMs using space $s$. 
\textsc{rt}$\mathsf{BQSPACE}$ is the quantum counterpart of \textsc{rt}$\mathsf{BPSPACE}$. 

It is well known that, for machines with constant space usage, one can remove the work tape altogether, at the cost of having a longer program, without changing the recognized language. This specialization of TMs yields the well-known finite automata \cite{Si06,Ra63,YS11A}. One quirk in the literature that we should be careful about is the fact that the transition probabilities and amplitudes of probabilistic and quantum finite automata (PFAs and QFAs, respectively) are allowed to be arbitrary real numbers (including uncomputable ones) of absolute value at most 1, enabling these machines to recognize many Turing-undecidable languages \cite{Ra63}, and we will use different  names (Table \ref{tbl:rt-class}) to denote the classes arising from this different range of $\delta$.

\begin{table}[h!]
\caption{Classes of languages recognized by real-time finite automata}
\centering
\fbox{
\begin{minipage}{0.7\textwidth}
\[
	\begin{array}{lcc}
		\mbox{\underline{machine type}}  & \mbox{\underline{unbounded-error}} 
		& \mbox{\underline{nondeterministic}} 
		\\
		\mbox{real-time PFA} & \mathsf{S}
			\cup \mathsf{coS} = \mathsf{uS}
			& \mathsf{REG}
		\\
		\mbox{real-time QFA} & \mathsf{QAL}
		\cup \mathsf{coQAL} = \mathsf{uQAL} & 
		\mathsf{NQAL}
		\\ \hline
	\end{array}	
\]
\footnotesize
$ \mathsf{S} $ and $ \mathsf{QAL} $ are the classes of languages recognized with cutpoint $ \frac{1}{2} $ by real-time PFAs \cite{Ra63} and QFAs \cite{YS11A}, respectively.
$ \mathsf{REG} $ is the class of regular languages.
\end{minipage}
}	
\label{tbl:rt-class}
\end{table}

\section{Probabilistic and quantum machines with restart} \label{sec:restart}

In this section, we introduce the effects of adding the capability of ``restarting" the computation, that is, restoring the internal state, input head position, and the work tape to their initial settings in a single move, to the set of allowed actions of several Turing machine variants and specializations. This simple and seemingly useless action turns out to be important in our analysis of postselection in the rest of the paper. Finite automata with restart were introduced and analyzed in \cite{YS10B}, we generalize the concept to TMs here.

Using our general framework of TM definitions mentioned in the previous section, we can define a (probabilistic or quantum) TM with restart simply by stating that the set of possible move outcomes, 
$ \Delta $,  contains an additional element, $ rs $,
and the overall set of states, $Q$, is now correspondingly partitioned into four subsets: The usual $ Q_c $, $ Q_a $, and $ Q_r $,  and the set $ Q_{rs} $, namely, the set of restarting states.
Any transition to a state in $ Q_{rs} $ results 
the machine to restart from the initial configuration in the next move, as explained above.

A segment of computation of a TM with restart which begins with a (re)start, and ends with
a halting or restarting state will be called a \textit{round}. For any time bound $ t $, a \textit{$ t $-time TM with restart} is  a TM with restart with the restriction that the runtime of no single round is greater than $ t $. We will be focusing on \textit{real-time machines with restart}, that is, TMs in which the input head is forbidden to make single-step leftward or stationary moves, but restarts are allowed. Note that the \textit{overall} expected runtime of a $ t $-time TM with restart can be much more than $ t $; for instance, \cite{YS10B} contains several examples of real-time TMs with restart that have exponential runtime.

Let $ p^{a}_{\mathcal{R}} (w) $ ($ p^{r}_{\mathcal{R}} (w) $) be the probability that $w$ is accepted (rejected) in a single round of a TM with restart named $ \mathcal{R} $. For a given input string $ w \in \Sigma^{*} $,
the overall acceptance and rejection probabilities of $w$ can be calculated as shown in the following lemma \cite{YS10B}.

\begin{lemma}
	\label{lem:overall-acc-rej}
	P($\mathcal{R}$ accepts $w$)$  =\frac{ p_{\mathcal{R}}^{a} (w) }{ p_{\mathcal{R}}^{a} (w) + p_{\mathcal{R}}^{r} (w) } $
	and 
	P($\mathcal{R}$ rejects $w$)$ =\frac{ p_{\mathcal{R}}^{r} (w) }{ p_{\mathcal{R}}^{a} (w) + p_{\mathcal{R}}^{r} (w) } $.
\end{lemma}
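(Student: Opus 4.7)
The plan is to view the computation as an infinite sequence of independent, identically distributed \emph{rounds}, each of which ends with one of three observable outcomes (accept, reject, restart), and then sum a geometric series.

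First I would let $p^{rs}_{\mathcal{R}}(w) = 1 - p^{a}_{\mathcal{R}}(w) - p^{r}_{\mathcal{R}}(w)$, the probability that a single round ends by restarting rather than halting. The key structural observation is that a transition into a restarting state reinstates the initial configuration in its entirety (internal state, input head position, work tape contents, and, in the quantum case, the finite register), so the behaviour of every round conditioned on being reached is stochastically identical to, and independent of, all previous rounds. Consequently the event ``$\mathcal{R}$ eventually accepts $w$'' decomposes as the disjoint union, over $k \geq 0$, of the event ``rounds $1, \ldots, k$ all end by restart and round $k+1$ ends in acceptance,'' which has probability $(p^{rs}_{\mathcal{R}}(w))^{k}\, p^{a}_{\mathcal{R}}(w)$. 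Summing yields
\[
\mbox{P}(\mathcal{R}\text{ accepts }w) \;=\; p^{a}_{\mathcal{R}}(w) \sum_{k=0}^{\infty} (p^{rs}_{\mathcal{R}}(w))^{k} \;=\; \frac{p^{a}_{\mathcal{R}}(w)}{1 - p^{rs}_{\mathcal{R}}(w)} \;=\; \frac{p^{a}_{\mathcal{R}}(w)}{p^{a}_{\mathcal{R}}(w) + p^{r}_{\mathcal{R}}(w)},
\]
and a symmetric argument gives the rejection probability.

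The main obstacle to be handled carefully is the quantum case, where one might worry about interference between branches belonging to different rounds. The definition of our QTMs is what rescues us: after every single transition the finite register is measured and its value is classified into $\Omega_{a}$, $\Omega_{r}$, $\Omega_{c}$, or (for machines with restart) $\Omega_{rs}$, after which the register is irreversibly reinitialized to $\omega_{1}$. A restart therefore corresponds to observing a concrete measurement outcome; once this outcome is registered, the surviving superposition is replaced by the fixed, deterministic initial configuration, so no coherence can be carried from one round into the next. Round-level probabilities accordingly behave identically in the classical and quantum settings, and the geometric-series computation above is valid in both.

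Finally, a brief remark should cover the degenerate case $p^{a}_{\mathcal{R}}(w) + p^{r}_{\mathcal{R}}(w) = 0$, in which the machine almost surely loops forever through restarts; here the displayed ratios are undefined, and the overall acceptance and rejection probabilities are simply $0$, which is what the sum of the vanishing geometric series evaluates to anyway.
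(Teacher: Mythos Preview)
Your proof is correct and follows the same geometric-series approach as the paper, which simply writes $\mbox{P}(\mathcal{R}\text{ accepts }w)=\sum_{i=0}^{\infty}(1-p_{\mathcal{R}}^{a}(w)-p_{\mathcal{R}}^{r}(w))^{i}\,p_{\mathcal{R}}^{a}(w)$ and sums. Your version is in fact more careful than the paper's, which omits any discussion of why rounds are independent in the quantum case and does not mention the degenerate situation.
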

\begin{proof}
	\begin{eqnarray*}
		\mbox{P(}\mathcal{R}\mbox{ accepts }w) & = &
        \sum_{i=0}^{\infty}\left(1-p_{\mathcal{R}}^{a}(w)-p_{\mathcal{R}}^{r}(w) \right)^{i}
        p_{\mathcal{R}}^{a}(w)\\
		& = & p_{\mathcal{R}}^{a}(w) \left(
		\dfrac{1}{1-(1-p_{\mathcal{R}}^{a}(w)-p_{\mathcal{R}}^{r}(w))} \right) \\
		& = &
		\dfrac{p_{\mathcal{R}}^{a}(w)}{p_{\mathcal{R}}^{a}(w)+p_{\mathcal{R}}^{r}(w)}
	\end{eqnarray*}
	P($\mathcal{R}$ rejects $w$) is calculated in the same way.
\end{proof}

\section{Turing machines with postselection} \label{sec:Posdefs}

We are now ready to present our model of computation with postselection.

Turing machines with postselection are defined  such that the overall state set is partitioned into four subsets, namely, the sets of continuing ($ Q_{c} $), postselection accept ($ Q_{pa} $), postselection reject ($ Q_{pr} $), and nonpostselection halting ($ Q_{nh} $) states. Correspondingly, we set $ \Delta = \{ c,pa,pr,nh\} $.
For these machines, $ Q_{h} = Q_{pa} \cup Q_{pr} \cup Q_{nh} $.
The computation is terminated whenever the machine enters a state in $ Q_{h} $. For every possible input, a Turing machine with postselection always halts in a state in $Q_{pa} \cup Q_{pr}$ with nonzero probability.

Probabilistic and quantum real-time finite automata with postselection (PFAPs and QFAPs) are obtained by specializing the TM version in the manner described in Section \ref{sec:Prel}; see Appendix \ref{app:postfa} for detailed definitions.

Let $ p^{a}_{\mathcal{P}} (w) $ (resp. $ p^{r}_{\mathcal{P}} (w) $) be the probability that a TM with postselection named $ \mathcal{P} $ reaches a state in $ Q_{pa} $ (resp. $ Q_{pr} $) when run on an input  $w$.\footnote{Note that we are using notation identical to that introduced in the discussion for machines with restart for these probabilities; the reason will
be evident shortly.} For each such $ w \in \Sigma^{*} $,
the overall acceptance and rejection probabilities of $w$ are obtained by normalization, and are given by
\begin{equation}
\label{eq:postacc}
       \mbox{P(}\mathcal{P}\mbox{ accepts }w) =
\dfrac{p_{\mathcal{P}}^{a}(w)}{p_{\mathcal{P}}^{a}(w)+p_{\mathcal{P}}^{r}(w)},
\end{equation}
and
\begin{equation}
\label{eq:postrej}
       \mbox{P(}\mathcal{P}\mbox{ rejects }w) =
\dfrac{p_{\mathcal{P}}^{r}(w)}{p_{\mathcal{P}}^{a}(w)+p_{\mathcal{P}}^{r}(w)}.
\end{equation}
``Postselection" is the name given to this process, where  any computational path ending with a transition to a state in $Q_{nh} $ is simply discarded, and only the ones ending with a state in  $Q_{pa} \cup Q_{pr}$ are ``selected".

For every language class $\mathbf{C}$ defined using a resource-bounded probabilistic or quantum machine model under a particular error regime, we define the class $\mathsf{Post}\mathbf{C}$ of the languages recognized under the same error regime by the corresponding type of machines with postselection.

Some other classes that will be studied are presented in Table \ref{tbl:postrt-class}. (We need to name these  separately, since it is easy to see that the standard versions of all these classes correspond to the same standard class, namely, the regular languages. The postselected versions are not identical, as will be apparent in the subsequent sections.)

\begin{table}[h!]
\caption{Classes of languages recognized by real-time constant-memory machines with postselection under different error regimes}
\centering
\fbox{
\begin{minipage}{0.95\textwidth}
\[
	\begin{array}{lccc}
		\mbox{\underline{machine type}} & \mbox{\underline{two-sided bounded-error}} & 
			\mbox{\underline{one-sided bounded-error}} &  \mbox{\underline{error-free}}
		\\
		\mbox{PFAP} & \mathsf{PostBS} & \mathsf{PostRS} & \mathsf{PostES}
		\\
		\mbox{QFAP} & \mathsf{PostBQAL} & \mathsf{PostRQAL} & \mathsf{PostEQAL}
				\\
		\mbox{real-time PTM} & \mathsf{Post}\textsc{rt}\mathsf{BPSPACE}(1) & \mathsf{Post}\textsc{rt}\mathsf{RSPACE}(1) 
			& \mathsf{Post}\textsc{rt}\mathsf{EPSPACE}(1) 
		\\
		\mbox{real-time QTM} & \mathsf{Post}\textsc{rt}\mathsf{BQSPACE}(1) & \mathsf{Post}\textsc{rt}\mathsf{RQSPACE}(1) 
			& \mathsf{Post}\textsc{rt}\mathsf{EQSPACE}(1)
	\end{array}
\]
\end{minipage}
}	
\label{tbl:postrt-class}
\end{table}

Since none of the results on real-time constant-space machines in this paper are sensitive to the existence of uncomputable numbers among the transition probabilities of the program, we will mostly use the shorter class names in the top two rows of Table \ref{tbl:postrt-class} in the subsequent sections, with the implication that the same relationship is valid among the corresponding classes in the bottom two rows.

Trivially, every postselected class contains its standard version with the same resource bounds, and we are interested in finding out whether the inclusion is proper or not.

Although we will not have much to say about classes defined solely in terms of time bounds, the reader will note that Aaronson's $ \mathsf{PostBQP}$ \cite{Aa05} has an equivalent definition in terms of our model, as the class of languages recognized with bounded error by polynomial-time QTMs with postselection.\footnote{A detailed treatment of the equivalence between the TM and circuit models of quantum computation can be found in \cite{Ya93}.}

\section{The power of postselection}\label{sec:powerofpost}

It is evident from the similarity of the statement of Lemma
\ref{lem:overall-acc-rej} and Equations \ref{eq:postacc}
and \ref{eq:postrej} that there is a close relationship between
machines with restart and those with postselection. This is set out in
the following theorem.

\begin{theorem}
\label{thm:posres}

For any time bound $t$ and space bound $s$, the class of languages
recognized by
$ t $-time and $ s $-space PTMs (resp. QTMs) with postselection is
identical to the class of languages recognized by
$ t $-time PTMs (resp. QTMs) with restart using space $ s $. The same
equality is also valid for the real-time, in particular, finite
memory, versions of these models.

\end{theorem}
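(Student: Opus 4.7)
The plan is to establish both inclusions by direct simulations that merely relabel measurement outcomes, exploiting the structural identity between Lemma~\ref{lem:overall-acc-rej} and Equations~\ref{eq:postacc}--\ref{eq:postrej}. The key observation is that both formulas for overall acceptance have exactly the form $p^a/(p^a+p^r)$, and they refer to per-round (restart) vs.\ prenormalization (postselection) probabilities respectively; so if I can arrange a bijection between machines of the two types under which single-round statistics match postselection statistics, the overall acceptance/rejection probabilities automatically coincide, preserving the recognized language under every error regime.

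For the forward direction, given a $t$-time, $s$-space PTM (or QTM) with restart $\mathcal{R}$, I construct a machine with postselection $\mathcal{P}$ by keeping $Q$, $\Sigma$, $\Gamma$, $\Omega$, $\delta$, and $q_1$ verbatim, and redefining the partition of $\Delta$ as follows: every state that was accepting (resp.\ rejecting) in $\mathcal{R}$ becomes a postselection-accept (resp.\ postselection-reject) state in $\mathcal{P}$, and every restarting state of $\mathcal{R}$ becomes a nonpostselection-halting state of $\mathcal{P}$. The quantum well-formedness conditions depend only on the sum $\sum \delta(\cdots)$ for each source configuration, so this relabeling preserves unitarity (in the quantum case) or stochasticity (in the probabilistic case). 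By construction the computation of $\mathcal{P}$ on $w$ is identical to one round of $\mathcal{R}$ on $w$, so $p^a_{\mathcal{P}}(w)=p^a_{\mathcal{R}}(w)$ and $p^r_{\mathcal{P}}(w)=p^r_{\mathcal{R}}(w)$; combining this with Lemma~\ref{lem:overall-acc-rej} and (\ref{eq:postacc})--(\ref{eq:postrej}) yields equal overall acceptance probabilities. Clearly $\mathcal{P}$ inherits the time bound $t$ (since it terminates within one round) and space bound $s$. The reverse direction is entirely symmetric: given $\mathcal{P}$, swap the roles by sending postselection-accept $\mapsto$ accept, postselection-reject $\mapsto$ reject, and nonpostselection-halting $\mapsto$ restart, and $\mathcal{R}$ so defined has single-round probabilities matching $\mathcal{P}$'s prenormalization probabilities, hence identical overall behavior; the per-round time bound of $\mathcal{R}$ equals the runtime bound $t$ of $\mathcal{P}$, and the space bound is likewise preserved.

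The main obstacle is the quantum case, where a ``restart'' is not a unitary operation but a conditional reinitialization of the entire configuration. I would handle this by noting that our framework already models the transition step as a unitary followed by a projective measurement of the finite register partitioned into $\Omega_c \cup \Omega_a \cup \Omega_r \cup \Omega_{rs}$ (restart) or $\Omega_c \cup \Omega_{pa} \cup \Omega_{pr} \cup \Omega_{nh}$ (postselection). In either model, once the register is measured and classified, the surviving amplitudes of the ``continuing'' branches are untouched, while ``terminal'' branches contribute to $p^a, p^r$; the only difference is that a ``restart'' outcome triggers a reset of the configuration to the initial one (a legitimate conditional operation, since branches are distinguished by a classical measurement result), whereas a ``nonpostselection-halt'' outcome simply absorbs probability mass that is later divided out by normalization. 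Because neither side ever interferes the restart/discard branch with the surviving computation, the identity $p^a_{\mathcal{R}}(w) = p^a_{\mathcal{P}}(w)$, $p^r_{\mathcal{R}}(w) = p^r_{\mathcal{P}}(w)$ holds exactly, and no cross-round coherence is lost.

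Finally, the real-time and finite-memory clauses require no extra work: restricting $d_i$ to rightward-only moves, or eliminating $\Gamma$ entirely (as in PFAPs/QFAPs), touches only components of $\delta$ that the above relabeling leaves untouched, so both simulations specialize to the restricted models unchanged. This gives $\textsc{rt}$-versions of the equivalence, and in particular the equality of the classes named in Table~\ref{tbl:postrt-class} with the corresponding real-time finite-memory classes with restart defined in Section~\ref{sec:restart}.
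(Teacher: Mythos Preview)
Your proposal is correct and follows essentially the same approach as the paper: both directions are established by relabeling the outcome partition (restart $\leftrightarrow$ nonpostselection-halt, accept $\leftrightarrow$ postselection-accept, reject $\leftrightarrow$ postselection-reject), leaving $\delta$ untouched, and then invoking Lemma~\ref{lem:overall-acc-rej} together with Equations~\ref{eq:postacc}--\ref{eq:postrej} to conclude that overall acceptance probabilities coincide. Your additional remarks on well-formedness, the measurement-based handling of the non-unitary restart, and the specialization to real-time and finite-memory models are sound elaborations that the paper leaves implicit.
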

\begin{proof}
Given a (probabilistic or quantum) machine with postselection called
$\mathcal{P}$, we can construct a corresponding machine with restart
called  $\mathcal{R}$, which is identical to $\mathcal{P}$, except that all nonpostselection halting states of $\mathcal{P}$ are designated as
restart states in $\mathcal{R}$. $\mathcal{R}$'s accept and reject states correspond precisely to 
$\mathcal{P}$'s postselection accept and reject states, respectively.

Given a machine with restart  $\mathcal{R}$,
we construct a corresponding machine with postselection  $\mathcal{P}$
by starting with an exact copy of $\mathcal{R}$, designating the  old accept and reject states  as  the
postselection accept and reject states of $\mathcal{P}$, respectively,
and converting the restart states to 
nonpostselection halting states.

       By Lemma \ref{lem:overall-acc-rej} and Equations \ref{eq:postacc} and
\ref{eq:postrej}, the machines before and after these conversions
recognize the same language, with the same error bound.
\end{proof}

Theorem \ref{thm:posres} will be useful in our analyses of the
language classes corresponding to machines with postselection. One immediate corollary is that the postselected versions of classes  that are defined solely in terms of space bounds (for machines with two-way input heads) are equal to the corresponding classes for standard machines.

So the additional power brought by the capability of postselection is nil for space-bounded machines, and is probably difficult to prove for time-bounded machines. We therefore consider machines operating under simultaneous time-space bounds, and  are now able to demonstrate that postselection increases the
recognition power of both probabilistic and quantum computers.
 For a given string $ w $, let $ |w|_{\sigma} $ denote the number of
occurrences of symbol $ \sigma $ in $ w $.

\begin{theorem}
       \label{thm:postpbeatsp}
        PTMs with postselection that use $o(\log \log n)$ space and
        that have polynomial expected runtime are strictly more powerful
than their standard versions.
\end{theorem}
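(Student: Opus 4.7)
The plan is to apply Theorem \ref{thm:posres} to reduce the statement to an analogous claim about PTMs with restart, and then to exhibit a concrete non-regular language $L$ recognized by a polynomial-expected-time, constant-space PTM with restart that provably lies outside the corresponding standard class.

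For the upper bound, I would construct an explicit machine recognizing a separating language such as a variant of $\{a^n b^n : n\geq 1\}$, or a simple arithmetic comparison between $|w|_a$ and $|w|_b$. The machine would flip independent biased coins on each scanned input symbol to produce a per-round accept probability $p^a(w)$ and a per-round reject probability $p^r(w)$, each an explicit function of the symbol counts of $w$. Lemma \ref{lem:overall-acc-rej} then shows that the overall acceptance probability equals $p^a(w)/(p^a(w)+p^r(w))$, and the coin biases would be chosen so that this ratio crosses a fixed cutpoint exactly on members of $L$. Polynomial expected runtime is secured by ensuring $p^a(w)+p^r(w) = 1/\mathrm{poly}(|w|)$, since the expected number of rounds is the reciprocal of the per-round termination probability and each round consumes $O(|w|)$ steps.

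For the lower bound, I would invoke the classical sublogarithmic-space limitation on probabilistic computation, which implies that standard polynomial-time PTMs using $o(\log\log n)$ space can only recognize regular languages. Since the $L$ produced in the first step is non-regular, it cannot lie in the standard class, establishing the strict inclusion. The main obstacle lies in the upper bound: most natural constant-space restart-based constructions for non-regular languages succeed per round with probability $2^{-\Theta(|w|)}$, yielding exponential expected total runtime. The construction must therefore be tailored so that both $p^a(w)$ and $p^r(w)$ are inverse-polynomial in $|w|$ while their ratio still cleanly separates members from nonmembers, and engineering the transition function to meet both constraints simultaneously is the delicate core of the argument.
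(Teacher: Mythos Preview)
Your overall plan mirrors the paper's proof: cite the Dwork--Stockmeyer lower bound to confine the standard class to the regular languages, then exhibit a non-regular language (the paper uses $L_{eq}=\{w\in\{a,b\}^*\mid |w|_a=|w|_b\}$) recognized by a constant-space real-time PTM with restart, and transport it via Theorem~\ref{thm:posres}.

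However, the ``main obstacle'' you identify is a phantom, and pursuing it could derail you. The quantity you want polynomial is the expected runtime of the \emph{postselection} machine, not of the restart machine. Look again at Theorem~\ref{thm:posres} and the definition preceding it: a $t$-time restart machine is one whose \emph{single round} takes at most $t$ steps, and the correspondence converts it into a $t$-time postselection machine, which runs once and halts (possibly in a nonpostselection state). The per-round halting probability $p^a(w)+p^r(w)$ governs the overall expected runtime of the \emph{restart} machine, but it is completely irrelevant to the runtime of the \emph{postselection} machine, which is just $n+2$ for a real-time construction regardless of how small $p^a+p^r$ is. So the standard constant-space restart construction for $L_{eq}$---which indeed has $p^a(w)+p^r(w)=2^{-\Theta(|w|)}$---already yields, via Theorem~\ref{thm:posres}, a real-time (hence linear-time) PTM with postselection recognizing $L_{eq}$ with bounded error. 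There is nothing to engineer; you can drop the inverse-polynomial constraint entirely.
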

\begin{proof}
       As proven by Dwork and Stockmeyer, standard polynomial-time PTMs that
use $o(\log \log n)$ space recognize precisely the regular languages
\cite{DS90}.    The nonregular language $ L_{eq} = \{ w \in \{a,b\}^{*}
\mid |w|_{a} = |w|_{b} \} $ can be recognized by
       a real-time PTM with restart using $O(1)$ space \cite{YS10B}. By
Theorem \ref{thm:posres},
       $ L_{eq} \in \mathsf{Post}\textsc{rt}\mathsf{BPSPACE}(1)$.
       We conclude that the $\mathsf{BPTISP}$ classes are properly contained in the $\mathsf{PostBPTISP}$ classes for these time-space bounds.
\end{proof}

If the PTMs in question are further restricted so that their input
tape heads can never move left, we do not need an expected value to
exist for the runtime:
\begin{theorem}
       \label{thm:one-way-postpbeatsp}
       One-way PTMs with postselection that use $o(\log \log n)$ space
       are strictly more powerful than their standard versions.
\end{theorem}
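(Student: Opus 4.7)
The plan is to replicate the two-step strategy used for Theorem \ref{thm:postpbeatsp}, with one-wayness taking the place of the polynomial-runtime hypothesis. For the upper bound on standard machines, I would invoke the one-way analogue of the Dwork--Stockmeyer result \cite{DS90}: bounded-error one-way PTMs using $o(\log \log n)$ space recognize only regular languages, with no further restriction on their runtime. The underlying intuition is that the accept/reject behaviour of a one-way machine is determined by the probability distribution over (internal-state, work-tape content) pairs attained when the input head first arrives at $\dollar$. With sublogarithmic work-space this distribution lives in a space of dimension $n^{o(1)}$, and bounded-error separation together with a crossing-sequence / pumping argument forces inputs to fall into only finitely many Myhill--Nerode classes.

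With that upper bound in hand, the separation is immediate from material already assembled in the paper. The nonregular language $L_{eq} = \{ w \in \{a,b\}^{*} \mid |w|_{a} = |w|_{b} \}$ is recognized by a real-time PTM with restart using $O(1)$ space \cite{YS10B}, and every real-time machine is \emph{a fortiori} a one-way machine, since its head never even stays in place. Theorem \ref{thm:posres} then converts this into a one-way PTM with postselection recognizing $L_{eq}$ in $O(1)$ space. Combined with the upper bound, this witnesses the strict inclusion claimed in the theorem.

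The main obstacle is the first step. If the best off-the-shelf statement of the regularity result is formulated only for machines that halt with probability $1$, or only for machines meeting an explicit time bound, one must verify that the conclusion extends to one-way PTMs whose runtime is unbounded and which may, with small probability, loop while stationed on $\dollar$. The needed extension is essentially routine: once the head has reached the right end-marker, the remainder of the computation is that of a probabilistic finite-state system with $n^{o(1)}$ configurations, and bounded-error acceptance pins each input's eventual acceptance probability to one of finitely many well-separated clusters, from which regularity follows by the standard Myhill--Nerode argument.
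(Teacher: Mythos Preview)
Your proposal is correct and follows the same two-step strategy as the paper: cite an off-the-shelf regularity result for small-space one-way PTMs, then separate via $L_{eq}$ using the real-time restart machine of \cite{YS10B} and Theorem~\ref{thm:posres}. The one substantive difference is that the paper does not need your ``main obstacle'' at all: rather than deriving a one-way analogue of \cite{DS90}, it simply cites Freivalds \cite{Fr85}, which already establishes that bounded-error one-way PTMs with $o(\log\log n)$ space recognize precisely the regular languages, with no additional runtime or halting hypothesis required. Your crossing-sequence sketch is therefore unnecessary, and pointing to \cite{Fr85} instead of \cite{DS90} makes the proof a one-liner.
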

\begin{proof}
        One-way PTMs that use $o(\log \log n)$ space are known to recognize precisely
       the regular languages \cite{Fr85}. The remainder follows the proof of
Theorem \ref{thm:postpbeatsp}.
\end{proof}

A similar advantage can also be demonstrated for QTMs, albeit under
more severe bounds. $w ^{r}$ denotes the reverse of string $ w $.
\begin{theorem}
       \label{thm:posqbeatsq}
       $ \textsc{rt}\mathsf{BQSPACE(1)} \subsetneq
\mathsf{Post}\textsc{rt}\mathsf{BQSPACE(1)} \subseteq \mathsf{PostBQAL}
$.
\end{theorem}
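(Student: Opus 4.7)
The plan is to prove the two inclusions separately, with $L_{eq}=\{w\in\{a,b\}^{*} : |w|_{a}=|w|_{b}\}$ serving again as the witness for strictness on the left, in the spirit of Theorem~\ref{thm:postpbeatsp}.

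For the right inclusion, I would use the standard ``fold the work-tape into the state set'' simulation: a real-time QTM whose work head visits only a constant number of cells has only finitely many work-tape configurations (tape contents together with head position), so these can be absorbed into an enlarged finite internal state set, yielding a real-time QFA that simulates the original machine step-for-step with the same transition amplitudes. The partition of the state set into postselection-accept, postselection-reject, nonpostselection-halting and continuing states is inherited by the simulating machine, so it is a real-time QFAP whose pre- and post-normalization accept/reject probabilities on every input agree with those of the original QTM, and hence so does its bounded error. Since $\mathsf{PostBQAL}$ admits arbitrary real amplitudes (of modulus at most $1$) while the simulating QFAP uses amplitudes from $\tilde{\mathbb{R}}\subseteq\mathbb{R}$, this witnesses $\mathsf{Post}\textsc{rt}\mathsf{BQSPACE}(1)\subseteq\mathsf{PostBQAL}$.

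For the strict inclusion on the left, the non-strict direction is immediate, since any standard real-time QTM is a postselected real-time QTM whose nonpostselection-halting set is empty. For strictness, I would take $L_{eq}$: by the result of \cite{YS10B} that $L_{eq}$ is recognized by a real-time PTM with restart using $O(1)$ space, together with Theorem~\ref{thm:posres} and the obvious simulation of PTMs by QTMs, we obtain $L_{eq}\in\mathsf{Post}\textsc{rt}\mathsf{BQSPACE}(1)$. Since $L_{eq}$ is nonregular, it then suffices to argue $\textsc{rt}\mathsf{BQSPACE}(1)\subseteq\mathsf{REG}$; by the same work-tape-folding simulation, this reduces to the known fact that bounded-error real-time QFAs (in the general quantum models encompassed by our definition) recognize only regular languages.

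The main technical point that needs care is this final regularity claim. Different QFA models yield different bounded-error classes --- several of them are strict sub-classes of $\mathsf{REG}$ --- but the direction we need, namely the upper bound $\subseteq\mathsf{REG}$, is the standard one, provable for instance via a Myhill--Nerode style argument on the finite-dimensional state space, and I would cite the corresponding theorem from \cite{YS11A} rather than reprove it. With that in hand, $L_{eq}\notin\textsc{rt}\mathsf{BQSPACE}(1)$, and strictness follows.
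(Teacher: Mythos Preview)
Your argument is correct, and the overall strategy---separate a nonregular language in the postselected class, then invoke the regularity upper bound $\textsc{rt}\mathsf{BQSPACE}(1)\subseteq\mathsf{REG}$---matches the paper's. The paper's proof is terser: it simply cites $\textsc{rt}\mathsf{BQSPACE}(1)=\mathsf{REG}$ from \cite{KW97,Bo03,Je07} and does not spell out the right inclusion, which (per Section~\ref{sec:Posdefs}) it treats as essentially definitional via the work-tape folding you describe.

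The one genuine difference is the choice of witness. The paper uses $L_{pal}=\{w\in\{a,b\}^{*}\mid w=w^{r}\}$, recognized by a real-time constant-space \emph{quantum} machine with restart \cite{YS10B}, rather than your $L_{eq}$, which comes via a probabilistic machine with restart and then the PTM-by-QTM simulation. Both choices establish the theorem, but the paper's pick of $L_{pal}$ pays off immediately afterwards: because $L_{pal}\notin\mathsf{BPSPACE}(s)$ for $s=o(\log n)$, the same witness simultaneously separates $\mathsf{Post}\textsc{rt}\mathsf{BPSPACE}(s)$ from $\mathsf{Post}\textsc{rt}\mathsf{BQSPACE}(s)$. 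Your $L_{eq}$, being already in $\mathsf{PostBS}$, cannot be reused for that purpose. On the other hand, your route is slightly more self-contained, since it piggybacks on the result already quoted in Theorem~\ref{thm:postpbeatsp} rather than importing a second construction from \cite{YS10B}.
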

\begin{proof}
       It is known \cite{KW97,Bo03,Je07} that
\textsc{rt}$\mathsf{BQSPACE(1)}=\mathsf{REG}$.
       The nonregular language $ L_{pal} = \{w \in \{a,b\}^{*} \mid w = w
^{r} \} $ can be recognized by
       a real-time QTM with restart using $O(1)$ space \cite{YS10B}. By
Theorem \ref{thm:posres},
       $ L_{pal} \in \mathsf{Post}\textsc{rt}\mathsf{BQSPACE(1)} $, leading
us to conclude that QTMs with postselection outperform their standard
counterparts under these time-space bounds.
\end{proof}

We are also able to show that quantum postselection machines
outperform their classical counterparts. This follows easily from the well-known relation
\begin{equation}\label{eq:aw02}
\mathsf{BPSPACE}(s) \subsetneq \mathsf{BQSPACE}(s)
\end{equation}
for any space bound $s=o(\log n)$, ($ L_{pal} \in \mathsf{BQSPACE(1)} $ \cite{AW02}, but $ L_{pal} \notin \mathsf{BPSPACE}(s) $ for such $s$ \cite{FK94}) since we now know that these classes equal their postselected versions. We wish to find a setup where an
advantage of a standard quantum model over its classical counterpart
has not been shown, and demonstrate the superiority of postselected
quantum over postselected probabilistic  in that context. The real-time
restriction is again seen to be useful in this regard. It is presently
not known whether $\textsc{rt}\mathsf{BPSPACE}(s)=
\textsc{rt}\mathsf{BQSPACE}(s)$ or not for any space bound
$s=\omega(1)$. However, the fact that we have a real-time QFA with postselection for recognizing  $ L_{pal} $ (Theorem
\ref{thm:posqbeatsq}), combined with the argument above for Equation \ref{eq:aw02}, lead us to conclude that
\[\mathsf{Post}\textsc{rt}\mathsf{BPSPACE}(s)\subsetneq
\mathsf{Post}\textsc{rt}\mathsf{BQSPACE}(s) \mbox{ for all } s=o(\log n).\]

Aaronson \cite{Aa05} showed that any $\mathsf{PostBQP}$ computation
can be repeated a polynomial number of times to reduce the error
probability to $2^{-p(n)}$ for any desired polynomial $p$, and this
applies easily to our generalized classes, with one exception: Since
it is not allowed to increase the runtime of real-time machines, error reduction for the
$\mathsf{Post}\textsc{rt}\mathsf{BPSPACE}$ and
$\mathsf{Post}\textsc{rt}\mathsf{BQSPACE}$ classes has to be performed
by repeating the original computation not sequentially, but
parallelly, essentially by increasing the size of the program. We know
how to do this parallelization only for $O(1)$-space machines; see
Appendix \ref{app:postfa-error}.

\begin{theorem}
       \label{thm:post-closure}      
    $\mathsf{PostBS}$, $
\mathsf{PostBQAL}$, and the classes $\mathsf{PostBPTIME}(t)$ and 
       $\mathsf{PostBQTIME}(t)$ (for every time bound $t$)
       are closed under complementation, union, and intersection.
Furthermore the classes
       $\mathsf{Post}\textsc{rt}\mathsf{BPSPACE}(s)$ and $
\mathsf{Post}\textsc{rt}\mathsf{BQSPACE}(s)$
       are closed under complementation for any space bound $s$.
\end{theorem}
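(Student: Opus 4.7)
For closure under complementation, the plan is to swap the postselection outcome sets. Given a machine $\mathcal{P}$ with sets $Q_{pa}$ and $Q_{pr}$ that recognizes a language $L$ within error $\epsilon$, construct $\mathcal{P}'$ by interchanging the labels of $Q_{pa}$ and $Q_{pr}$, leaving the rest of the specification untouched. Equations \ref{eq:postacc} and \ref{eq:postrej} then give that $\mathcal{P}'$ accepts $w$ with exactly the probability with which $\mathcal{P}$ rejects $w$, so $\mathcal{P}'$ recognizes $\overline{L}$ within the same $\epsilon$. This swap uses no additional resources and preserves every error regime, so it handles all the classes listed, including $\mathsf{Post}\textsc{rt}\mathsf{BPSPACE}(s)$ and $\mathsf{Post}\textsc{rt}\mathsf{BQSPACE}(s)$ for arbitrary $s$.

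For union and intersection closure of the four bounded-error classes, the plan is a two-step construction: first amplify the error of the given machines, then combine them via an independent product machine that implements a postselected AND. Given $\mathcal{P}_1, \mathcal{P}_2$ recognizing $L_1, L_2$ with small error $\epsilon$, build $\mathcal{P}$ whose configuration is the pair of the two simulated configurations and whose single-step behavior is the direct product (tensor product in the quantum case) of the two transition rules, so that the simulations are independent. Assign
\[ Q_{pa}(\mathcal{P}) = Q_{pa}^{(1)} \times Q_{pa}^{(2)}, \qquad Q_{pr}(\mathcal{P}) = \bigl[(Q_{pa}^{(1)} \cup Q_{pr}^{(1)}) \times (Q_{pa}^{(2)} \cup Q_{pr}^{(2)})\bigr] \setminus Q_{pa}(\mathcal{P}), \]
and treat every other pair as nonpostselection-halting. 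Independence makes the pre-postselection probabilities factor, so the ratio in Equation \ref{eq:postacc} for $\mathcal{P}$ on input $w$ becomes $P_1(w) \cdot P_2(w)$, the product of the individual postselected acceptance probabilities. Thus $\mathcal{P}$ accepts with probability at least $(1-\epsilon)^2$ on $L_1 \cap L_2$ and at most $\epsilon$ off it, which is bounded-error provided $\epsilon < 1 - 1/\sqrt{2}$. Union closure then follows from the De Morgan identity $L_1 \cup L_2 = \overline{\bar{L_1} \cap \bar{L_2}}$ together with the complementation result above.

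The main obstacle is achieving the required amplification $\epsilon < 1 - 1/\sqrt{2}$ while remaining inside each class. For $\mathsf{PostBPTIME}(t)$ and $\mathsf{PostBQTIME}(t)$ this is handled by Aaronson's sequential-repetition technique \cite{Aa05}, which drives error to $2^{-p(n)}$ at polynomial overhead, and the product itself can be implemented either in parallel (at the cost of doubling the work-tape alphabet) or by sequentially re-scanning the input with the two-way head. For the real-time finite-automaton classes $\mathsf{PostBS}$ and $\mathsf{PostBQAL}$ sequential repetition is unavailable because the input head cannot revisit cells, so amplification has to be carried out in parallel by running many independent copies of the machine simultaneously and postselecting on their joint outcomes, as described in Appendix \ref{app:postfa-error}. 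That parallel amplification preserves real-time operation and constant space while reducing the error exponentially in the number of copies, which is exactly what feeds into the AND product to complete the proof.
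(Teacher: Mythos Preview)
Your proof is correct and follows essentially the same route as the paper: complementation by swapping the postselection accept/reject sets, intersection via the tensor/direct product with $Q_{pa}=Q_{pa}^{(1)}\times Q_{pa}^{(2)}$ and the remaining postselection pairs as $Q_{pr}$, and amplification handled sequentially for the time-bounded classes and in parallel (Appendix~\ref{app:postfa-error}) for the real-time finite-automaton classes. The only cosmetic difference is that the paper builds the union machine directly (dual to the intersection construction, with $Q_{pr}=Q_{pr}^{(1)}\times Q_{pr}^{(2)}$) rather than invoking De~Morgan, and fixes $\epsilon\le\frac14$ up front to get the concrete bounds $\frac{9}{16}$ versus $\frac{1}{16}$.
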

\begin{proof}
As above, Aaronson's proof of $\mathsf{PostBQP}$'s closure properties
can be adapted easily for our classes,
except for real-time machines using non-constant space.
(See Appendix \ref{app:postfa-closure}.)
\end{proof}

\begin{theorem}
       \label{thm:PostQ-subset-Q}
       For any time bound $t$ and space bound $s$,
       \begin{itemize}
               \item  $\mathsf{PostBPTISP}(t,s) \subseteq \mathsf{PrTISP}(t,s)$, and,
               \item  $\mathsf{PostBQTISP}(t,s) \subseteq \mathsf{PrQTISP}(t,s)$.
       \end{itemize}
       Furthermore,   $ \mathsf{PostBS} \subseteq
\mathsf{S} $, and $ \mathsf{PostBQAL} \subseteq
\mathsf{QAL} $.
\end{theorem}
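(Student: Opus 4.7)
The plan is to turn a bounded-error postselection machine into a standard (non-postselection) machine recognizing the same language with unbounded error, by replacing the ``discard'' action of postselection with a fair coin flip between accepting and rejecting. Since unbounded-error classes use a strict cutpoint (typically $\tfrac{1}{2}$), it will suffice to arrange that the simulating machine's overall acceptance probability is shifted by exactly a signed ``advantage'' term that keeps the right side of $\tfrac{1}{2}$.

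Concretely, given a postselection machine $\mathcal{P}$ recognizing $L$ with error bound $\epsilon<\tfrac{1}{2}$, I would build a standard machine $\mathcal{P}'$ that mimics $\mathcal{P}$ and, whenever $\mathcal{P}$ would enter a state in $Q_{pa}$ (resp.\ $Q_{pr}$), accepts (resp.\ rejects); whenever $\mathcal{P}$ would enter a state in $Q_{nh}$, $\mathcal{P}'$ instead flips a fair coin and accepts or rejects accordingly. Using $p^a_{\mathcal{P}}(w)$ and $p^r_{\mathcal{P}}(w)$ as in Equations \ref{eq:postacc} and \ref{eq:postrej}, one has
\[
   \mbox{P}(\mathcal{P}' \text{ accepts } w) \;=\; p^a_{\mathcal{P}}(w) + \tfrac{1}{2}\bigl(1 - p^a_{\mathcal{P}}(w) - p^r_{\mathcal{P}}(w)\bigr) \;=\; \tfrac{1}{2} + \tfrac{1}{2}\bigl(p^a_{\mathcal{P}}(w) - p^r_{\mathcal{P}}(w)\bigr).
\]
For $w\in L$, the bounded-error hypothesis gives $p^a_{\mathcal{P}}(w)/(p^a_{\mathcal{P}}(w)+p^r_{\mathcal{P}}(w))\geq 1-\epsilon>\tfrac{1}{2}$, hence $p^a_{\mathcal{P}}(w)>p^r_{\mathcal{P}}(w)$ and the acceptance probability strictly exceeds $\tfrac{1}{2}$; for $w\notin L$ the symmetric inequality gives a probability strictly below $\tfrac{1}{2}$. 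Therefore $\mathcal{P}'$ recognizes $L$ with strict cutpoint $\tfrac{1}{2}$, which is unbounded-error recognition, placing $L$ in $\mathsf{PrTISP}(t,s)$ or $\mathsf{PrQTISP}(t,s)$ (and in $\mathsf{S}$ or $\mathsf{QAL}$ in the real-time finite-memory case). The space usage is unchanged, and the time cost of the coin flip is at most one additional step.

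The main obstacle is to make the coin-flip modification fit inside the transition function itself, without consuming an extra step, since the real-time classes $\mathsf{PostBS}$, $\mathsf{PostBQAL}$, $\mathsf{Post}\textsc{rt}\mathsf{BPSPACE}$, and $\mathsf{Post}\textsc{rt}\mathsf{BQSPACE}$ forbid such extra steps. For the classical case this is easy: in the transition producing any $nh$ outcome with probability $p$, redirect half of that mass to a fresh accept state and the other half to a fresh reject state, so that each column of the transition matrix remains stochastic. For the quantum case, I would introduce, for every register symbol $\omega\in\Omega_{nh}$, two fresh symbols $\omega^a\in\Omega_a$ and $\omega^r\in\Omega_r$, and replace every amplitude $\alpha$ produced for $\omega$ by amplitudes $\alpha/\sqrt{2}$ for $\omega^a$ and $\alpha/\sqrt{2}$ for $\omega^r$. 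A direct inner-product computation then shows that this substitution preserves column norms and preserves pairwise orthogonality between different input configurations, since the two new symbols are fresh and distinct; the operator so obtained is an isometry on the original input subspace and can be completed to a unitary on the enlarged configuration space, yielding a well-formed standard QTM with exactly the same time and space profile as $\mathcal{P}$. With this modification available, the calculation above goes through verbatim for all four of the stated inclusions.
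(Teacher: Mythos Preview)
Your proposal is correct and follows essentially the same approach as the paper: replace each transition into a nonpostselection halting state by an equiprobable split into accept and reject, so that the resulting standard machine accepts with probability $\tfrac{1}{2}+\tfrac{1}{2}(p^a_{\mathcal{P}}(w)-p^r_{\mathcal{P}}(w))$, which lies on the correct side of the strict cutpoint $\tfrac{1}{2}$. The paper's proof is terser and does not spell out the acceptance-probability computation or the quantum well-formedness details you supply, but the construction is the same.
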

\begin{proof}
       A given machine $\mathcal{P}$ with postselection can be converted to a machine in
the corresponding standard model (without
        postselection) as follows:
       \begin{itemize}
               \item All transitions to nonpostselection halting states of $\mathcal{P}$ at the end of computation are replaced by two equiprobable transitions to accept and reject states.
               \item All  postselection accept states of $\mathcal{P}$ are designated as accept states in the new machine.
       \end{itemize}
       Therefore, only the strings which are members of the original machine's
language are accepted with probability
       exceeding $ \frac{1}{2} $ by the new machine.
\end{proof}

For probabilistic machines, the result above can be strengthened so that bounded-error
postselected probabilistic  space is shown to be properly contained
in standard unbounded-error probabilistic  space,
without requiring simultaneous time bounds, for all sublogarithmic space bounds. We recall the ineffectiveness of postselection for space bounded machines, and use the following fact:

\begin{fact}
 For any space bound $ s \in o(\log n) $, \[ \mathsf{BPSPACE}(s)
\subsetneq  \mathsf{PrSPACE}(s). \]
 Moreover,  $ \mathsf{BPSPACE(1)} \subsetneq
\textsc{rt}\mathsf{PrSPACE(1)} $.
\end{fact}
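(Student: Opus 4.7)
The plan is to establish both statements via a single witness language together with upper- and lower-bound citations drawn from the literature. For the first inclusion, $\mathsf{BPSPACE}(s) \subseteq \mathsf{PrSPACE}(s)$ is immediate since bounded-error recognition is a specialisation of unbounded-error recognition; the substance is in the strict part. I would take the palindrome language $L_{pal} = \{w \in \{a,b\}^{*} : w = w^{r}\}$. The upper bound $L_{pal} \in \mathsf{PrSPACE}(1) \subseteq \mathsf{PrSPACE}(s)$ follows from Freivalds' classical construction of a constant-memory two-way PFA that recognises palindromes with cutpoint $\tfrac{1}{2}$, while the matching lower bound $L_{pal} \notin \mathsf{BPSPACE}(s)$ for every $s = o(\log n)$ is exactly the Freivalds--Karpinski space bound [FK94] already invoked elsewhere in the paper.

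For the moreover claim, note that [FK94] already yields $L_{pal} \notin \mathsf{BPSPACE}(1)$, since $1 = o(\log n)$; what remains is to place a witness inside $\textsc{rt}\mathsf{PrSPACE}(1)$. The direct approach is to show that $L_{pal}$ is stochastic in Rabin's sense by exhibiting a real-time PFA that accepts with cutpoint $\tfrac{1}{2}$: using a one-pass probabilistic fingerprinting invariant, one encodes the prefix and the candidate reversed suffix into a shared quantity whose ``collision'' probability at the right end-marker is strictly larger on palindromes than on non-palindromes. Failing a clean construction for $L_{pal}$ itself, the backup plan is to substitute a non-regular stochastic language in the class $\mathsf{S}$ that is known to require more than constant space for bounded-error 2PFA recognition, and to reapply a $\Omega(\log n)$-type space lower bound in the style of [FK94] to rule it out of $\mathsf{BPSPACE}(1)$.

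The main obstacle is the real-time upper bound. Every two-way construction I know of for $L_{pal}$ fundamentally exploits revisiting prefix positions, which a single left-to-right pass cannot do, so a convincing conversion requires a cutpoint analysis of the underlying stochastic matrix --- essentially a spectral argument demonstrating that palindromic inputs concentrate acceptance mass above $\tfrac{1}{2}$ while non-palindromic inputs do not. If this analysis cannot be carried through for $L_{pal}$, one falls back on a known-stochastic, non-regular language chosen specifically to stand outside bounded-error 2PFA under constant space; the delicate point is that simple witnesses such as $L_{eq}$ are already in $\mathsf{BPSPACE}(1)$ via Freivalds' bounded-error 2PFA, so the substitute must be drawn from languages genuinely beyond the reach of bounded-error two-way constant-space recognition.
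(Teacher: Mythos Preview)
Your treatment of the first strict inclusion matches the paper exactly: $L_{pal}$ as witness, with $L_{pal}\in\mathsf{PrSPACE}(1)$ for the upper bound and the Freivalds--Karpinski lower bound $L_{pal}\notin\mathsf{BPSPACE}(s)$ for $s=o(\log n)$.

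For the ``moreover'' clause, however, you are working much too hard, and the sketch you give is not a proof. You propose either a direct real-time PFA for $L_{pal}$ via some fingerprinting-plus-spectral argument (which you yourself flag as the main obstacle and do not actually carry out), or a fallback to a different witness language with its own tailored lower bound. The paper avoids all of this by invoking a single structural result: Kaneps \cite{Ka91} proved that $\mathsf{PrSPACE}(1)=\textsc{rt}\mathsf{PrSPACE}(1)$, i.e., two-way constant-space unbounded-error PFAs recognise exactly the same languages as real-time ones. Once you have $L_{pal}\in\mathsf{PrSPACE}(1)$ from the first part, this equivalence immediately places $L_{pal}$ in $\textsc{rt}\mathsf{PrSPACE}(1)$, and the lower bound $L_{pal}\notin\mathsf{BPSPACE}(1)$ is already covered by the $o(\log n)$ case. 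No new construction, no spectral analysis, no substitute witness is needed.

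So the gap is not that your approach is wrong in principle, but that the real-time upper bound you identify as the crux is a solved problem at the class level, and your proposal does not mention or use that fact.
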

\begin{proof}
       As mentioned above, $ L_{pal} \notin \mathsf{BPSPACE}(s) $ for any
such $s$, but $ L_{pal} \in \mathsf{PrSPACE}(1)$.
       For the second relation, we also use the fact that $ \mathsf{PrSPACE(1)} =
\textsc{rt}\mathsf{PrSPACE(1)} $ \cite{Ka91}.
\end{proof}

Recalling Aaronson's celebrated result stating the equality of
bounded-error postselected quantum polynomial time to standard
unbounded-error probabilistic polynomial time, we ask whether the same
relationship holds for real-time automata. The answer turns out to be
negative.

\begin{theorem}\label{thm:aarnoo}
      $ \mathsf{PostBQAL} \subsetneq \mathsf{S}  $.
\end{theorem}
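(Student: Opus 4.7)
The plan is to establish the inclusion $\mathsf{PostBQAL} \subseteq \mathsf{S}$ by a direct Turakainen-style linearization of the postselection machine, and then upgrade it to a strict inclusion using the asymmetry in closure properties between the two classes.

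For the inclusion, I would start with a QFA with postselection $\mathcal{M}$ recognizing $L$ with bounded error $\epsilon < \tfrac{1}{2}$, and with postselection-accept/reject probabilities $p^a_{\mathcal{M}}(w)$ and $p^r_{\mathcal{M}}(w)$. The bounded-error condition together with Equations \ref{eq:postacc} and \ref{eq:postrej} makes membership in $L$ equivalent to the strict inequality $p^a_{\mathcal{M}}(w) > p^r_{\mathcal{M}}(w)$. Each of these quantities is a bilinear form in the amplitudes produced by $\mathcal{M}$; passing to the doubled state space (initial vector $|q_1\rangle \otimes \overline{|q_1\rangle}$ and evolution operators $U_\sigma \otimes \overline{U_\sigma}$) expresses $p^a_{\mathcal{M}}(w)-p^r_{\mathcal{M}}(w)$ as a single real linear functional of the doubled state, giving a constant-size, real-time generalized PFA with cutpoint $0$. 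Turakainen's theorem then normalizes this generalized PFA into an ordinary real-time PFA whose acceptance probability exceeds $\tfrac{1}{2}$ exactly on $L$, placing $L$ in $\mathsf{S}$.

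For the strict inclusion, I would invoke Theorem \ref{thm:post-closure}, which guarantees that $\mathsf{PostBQAL}$ is closed under complementation. Combined with the inclusion above, this yields $\mathsf{PostBQAL} \subseteq \mathsf{S} \cap \mathsf{coS}$. Since it is classical that the stochastic languages are not closed under complement, i.e., $\mathsf{S} \neq \mathsf{coS}$, any witness $L \in \mathsf{S} \setminus \mathsf{coS}$ lies in $\mathsf{S}$ but outside $\mathsf{PostBQAL}$: otherwise its complement would lie in $\mathsf{PostBQAL} \subseteq \mathsf{S}$, contradicting $L \notin \mathsf{coS}$.

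The step I expect to be most delicate is the linearization itself: one must track $p^a_{\mathcal{M}}$ and $p^r_{\mathcal{M}}$ simultaneously rather than only the normalized acceptance probability, keep the doubled automaton real-time and of constant size, and verify that Turakainen's normalization preserves the strict cutpoint needed for membership in $\mathsf{S}$. Once the linear inclusion is in hand, the separation is essentially automatic from the closure imbalance between the two classes.
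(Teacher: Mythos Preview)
Your argument is correct, but it differs from the paper's in both halves. For the inclusion, the paper does not linearize via Turakainen; instead it invokes Theorem~\ref{thm:PostQ-subset-Q}, whose one-line construction (route all nonpostselection-halting probability equiprobably to accept and reject, so the new acceptance probability is $\tfrac{1}{2}(1+p^a-p^r)$) already yields a standard real-time QFA recognizing $L$ with strict cutpoint $\tfrac{1}{2}$, and then cites the known equality $\mathsf{QAL}=\mathsf{S}$. Your Turakainen route is valid---the density-matrix evolution is linear, so $p^a-p^r$ is a real linear functional of a real-time, constant-size generalized automaton---but note that for the general QFA model used here the doubled evolution is $\sum_i E_{\sigma,i}\otimes\overline{E_{\sigma,i}}$ rather than a single $U_\sigma\otimes\overline{U_\sigma}$; you are in effect reproving the relevant direction of $\mathsf{QAL}=\mathsf{S}$ by hand. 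For the strictness, both proofs exploit a closure mismatch, but the paper uses the non-closure of $\mathsf{S}$ under union and intersection (together with Theorem~\ref{thm:post-closure}'s closure of $\mathsf{PostBQAL}$ under those operations), whereas you use the non-closure of $\mathsf{S}$ under complement. Your version has the pleasant by-product of the sharper containment $\mathsf{PostBQAL}\subseteq\mathsf{S}\cap\mathsf{coS}$; the paper's version, on the other hand, lets it exhibit concrete Turakainen-style witnesses $L_1,L_2\in\mathsf{S}$ with $L_1\cup L_2\notin\mathsf{S}$, at least one of which must lie outside $\mathsf{PostBQAL}$.
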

\begin{proof}
       By Theorem \ref{thm:PostQ-subset-Q}, we have
       $ \mathsf{PostBQAL} \subseteq \mathsf{QAL} $.
       It is known \cite{YS11A} that
       $ \mathsf{QAL}= \mathsf{S} $, and
       that $ \mathsf{S} $ is not
closed under union and intersection
       \cite{Fl72,Fl74,La74,Tu82}. We conclude by Theorem \ref{thm:post-closure} that
       the containment must be proper.
\end{proof}

For instance, for any triple of integers $u$, $v$, $w$, where
$0<u<v<w$, the languages $L_{1}= \{a^{m}b^{k}c^{n} | m^{u}>k^{v}>0\}$
and $L_{2}= \{a^{m}b^{k}c^{n} | k^{v}>n^{w}>0\}$ are in $ \textsc{rt}\mathsf{PrSPACE(1)} $,
whereas $L_{1} \cup L_{2}$ is not in $ \mathsf{S} $ \cite{Tu82}. It must
therefore be the case that at least one of $L_{1}$ and $L_{2}$ is not
in $ 
\mathsf{PostBQAL} $.

\section{Machines with postselection under other error regimes}\label{sec:errorfree}
Whether the permission to commit two-sided, rather than one-sided, error enlarges the class of  languages recognized by probabilistic programs is an open question for a wide range of resource bounds. For standard quantum programs, we do not even know if error-free computation  is equivalent to deterministic computation or not \cite{ADH97}. In this section, we examine these issues for machines with postselection.

We start with a characterization of the error-free classes. 
\begin{theorem}\label{thm:exactchar}
 For every time bound $t$,
\begin{itemize}
\item $\mathsf{PostEPTIME}(t) = \mathsf{NTIME}(t)\cap \mathsf{coNTIME}(t)$,

\item $\mathsf{PostEQTIME}(t) = \mathsf{NQTIME}(t)\cap \mathsf{coNQTIME}(t)$.

\end{itemize}
\end{theorem}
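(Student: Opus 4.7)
I would prove both equalities by a symmetric two-direction argument, with the classical and quantum cases handled in parallel; the quantum version differs only in replacing a fair coin flip by a Hadamard on a flag register.

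For the $\subseteq$ direction of $\mathsf{PostEPTIME}(t) = \mathsf{NTIME}(t)\cap\mathsf{coNTIME}(t)$, let $L$ be recognized error-free by a postselected PTM $\mathcal{P}$ of time complexity $t$. Unpacking equations (\ref{eq:postacc}) and (\ref{eq:postrej}), error-free recognition is equivalent to the statement that $w\in L$ iff $p_{\mathcal{P}}^{a}(w)>0$ and $p_{\mathcal{P}}^{r}(w)=0$, and $w\notin L$ iff $p_{\mathcal{P}}^{a}(w)=0$ and $p_{\mathcal{P}}^{r}(w)>0$. I would then define an NTM $N_{1}$ whose nondeterministic transitions are exactly the transitions of $\mathcal{P}$ carrying positive probability, with $Q_{pa}$ designated accepting and every other halting state rejecting. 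An accepting computation path of $N_{1}$ on $w$ exists iff $p_{\mathcal{P}}^{a}(w)>0$, so $L(N_{1})=L$ and $L\in\mathsf{NTIME}(t)$. Swapping the roles of $Q_{pa}$ and $Q_{pr}$ yields a second NTM witnessing $\overline{L}\in\mathsf{NTIME}(t)$, i.e. $L\in\mathsf{coNTIME}(t)$.

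For the $\supseteq$ direction, let $N_{1}$ and $N_{2}$ be NTMs of time complexity $t$ with $L(N_{1})=L$ and $L(N_{2})=\overline{L}$. I would construct a postselected PTM $\mathcal{P}$ that in its first step flips a fair coin, and then simulates $N_{1}$ (respectively $N_{2}$) on the relevant branch with each nondeterministic choice implemented as a uniform probabilistic branching. In the $N_{1}$-branch, reaching an accepting configuration enters $Q_{pa}$ and a rejecting one enters $Q_{nh}$; in the $N_{2}$-branch, an accepting configuration enters $Q_{pr}$ and a rejecting one enters $Q_{nh}$. For $w\in L$, the $N_{1}$-branch contains at least one positive-probability path to $Q_{pa}$ while the $N_{2}$-branch has no path to $Q_{pr}$, so $p_{\mathcal{P}}^{a}(w)>0=p_{\mathcal{P}}^{r}(w)$; normalization via (\ref{eq:postacc}) gives acceptance probability $1$. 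For $w\notin L$ the situation is symmetric, giving rejection probability $1$. The runtime is $t+O(1)$, which fits inside $\mathsf{PostEPTIME}(t)$ (absorbing the extra constant into $t$).

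The quantum case uses exactly the same template. In the $\subseteq$ direction, the relabeling argument is unchanged and yields $\mathsf{NQTIME}(t)$ and $\mathsf{coNQTIME}(t)$ witnesses. In the $\supseteq$ direction I replace the initial coin flip by a Hadamard on a dedicated one-qubit flag register that records which of $N_{1}$ or $N_{2}$ is being simulated; after this initialization, the flag is never touched again until the final labeling step, so states carrying different flag values remain on orthogonal subspaces throughout. Consequently the global acceptance and rejection amplitudes decompose additively over the two branches, giving $p_{\mathcal{P}}^{a}(w)=\tfrac{1}{2}\,\Pr[N_{1}\text{ accepts }w]$ and $p_{\mathcal{P}}^{r}(w)=\tfrac{1}{2}\,\Pr[N_{2}\text{ accepts }w]$, and the normalization argument goes through verbatim. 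The only place that required genuine care, which I expect to be the main bookkeeping obstacle, is precisely this non-interference between the two simulated branches; once the flag-register trick is in place, the rest is formal.
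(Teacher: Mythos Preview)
Your proposal is correct. The $\subseteq$ direction is exactly what the paper does: relabel the nonpostselection halting states as reject states to get a cutpoint-zero machine for $L$, and swap the postselection accept/reject designations for $\overline{L}$.

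The $\supseteq$ direction is a genuine (minor) variant. The paper runs the two nondeterministic machines \emph{sequentially}: it first simulates $\mathcal{M}_1$; any accepting branch enters $Q_{pa}$, and on each rejecting branch it then simulates $\mathcal{M}_2$, entering $Q_{pr}$ if that accepts and $Q_{nh}$ otherwise. This gives an $O(t)$-time (roughly $2t$) postselection machine. Your construction instead runs the two machines \emph{in parallel} via an initial fair coin flip (Hadamard on a flag qubit in the quantum case), giving $t+O(1)$ steps. Both rely on the same key point, namely that under postselection it does not matter how small the probability of reaching a postselection state is. Your version is tighter in runtime and avoids having to reset the tape between the two simulations; the paper's sequential version has the advantage that no non-interference argument is needed in the quantum case, since the two simulations never run concurrently. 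Your flag-register trick handles that issue cleanly, so either route is fine.
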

\begin{proof}
We can convert a  machine with postselection $\mathcal{M}$ that recognizes its language $L$ with no error to an equivalent nondeterministic machine $\mathcal{M'}$ which operates within the same time and space bounds by simply designating all nonpostselection halting states of $\mathcal{M}$ as reject states (in addition to any more reject states inherited from the original definition) in $\mathcal{M'}$. If  in addition to this transformation, we also switch the designations of the original postselection halting states, we obtain a nondeterministic machine recognizing the complement of $L$.

For the inclusion in the other direction, let $ \mathcal{M}_{1} $ and $ \mathcal{M}_{2} $ be two probabilistic (resp. quantum) TMs recognizing a language $ L $, 
and its complement, respectively, with cutpoint zero in time $t$. We build a $ O(t) $-time  PTM (resp. QTM)
with postselection named $ \mathcal{R} $ that 
runs $ \mathcal{M}_{1} $ and $ \mathcal{M}_{2} $ separately on its input. 
If $ \mathcal{M}_{1} $ accepts, $ \mathcal{R} $ accepts. 
Otherwise $ \mathcal{R} $ runs $ \mathcal{M}_{2} $ on the input. 
If $ \mathcal{M}_{2} $ accepts, $ \mathcal{R} $ rejects. 
Otherwise, $ \mathcal{R} $ halts in a nonpostselection state. It is easy to see that  $ \mathcal{R} $  makes no error, and halts with probability 1.
\end{proof}

This construction can be modified easily to apply to the real-time finite automata cases as well, and  we can  conclude

\begin{corollary}\label{corollary:exactfa}
	$ \mathsf{PostES}=\mathsf{REG}  $, and $ \mathsf{PostEQAL} = \mathsf{NQAL} \cap \mathsf{coNQAL} $.
\end{corollary}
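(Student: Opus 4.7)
The plan is to mirror the two-direction argument of Theorem \ref{thm:exactchar}, replacing the sequential ``run $\mathcal{M}_1$, then run $\mathcal{M}_2$'' construction from that proof (which is illegal in real time) by a \emph{parallel} simulation, and otherwise reusing the transformations of halting-state labels verbatim.

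For the forward inclusions $\mathsf{PostES} \subseteq \mathsf{REG}$ and $\mathsf{PostEQAL} \subseteq \mathsf{NQAL} \cap \mathsf{coNQAL}$, I would take an error-free PFAP (resp.\ QFAP) $\mathcal{M}$ recognizing $L$ and produce a nondeterministic real-time PFA (resp.\ QFA) for $L$ by redesignating every state of $Q_{nh}$ as a reject state; and a nondeterministic machine for $\overline{L}$ by additionally swapping the labels of the postselection accept and postselection reject states. Using that $\mathcal{M}$ makes no error, one checks in one line that the resulting machines have nonzero acceptance probability exactly on $L$ and on $\overline{L}$, respectively. Since $\mathsf{REG} = \mathsf{coREG}$, this yields $\mathsf{PostES}\subseteq \mathsf{REG}$. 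The reverse containment $\mathsf{REG}\subseteq \mathsf{PostES}$ is immediate: a DFA is already an error-free PFAP with $Q_{nh}=\emptyset$.

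The only real work is in showing $\mathsf{NQAL} \cap \mathsf{coNQAL} \subseteq \mathsf{PostEQAL}$, and this is where I expect the main obstacle. Given real-time QFAs $\mathcal{M}_1$ and $\mathcal{M}_2$ recognizing $L$ and $\overline{L}$ respectively with cutpoint zero, I would construct a QFAP $\mathcal{R}$ whose state set is (isomorphic to) the disjoint union of the state sets of $\mathcal{M}_1$ and $\mathcal{M}_2$, together with postselection accept, postselection reject, and nonpostselection halting states. On scanning $\cent$, $\mathcal{R}$ applies a Hadamard-like splitting on the ``branch'' part of its state, placing itself in an equal-amplitude superposition of the initial state of $\mathcal{M}_1$ and the initial state of $\mathcal{M}_2$; on every subsequent input symbol, the ``$\mathcal{M}_1$-branch'' evolves according to the transition function of $\mathcal{M}_1$ and the ``$\mathcal{M}_2$-branch'' according to that of $\mathcal{M}_2$. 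When $\dollar$ is read, every configuration that is an accept configuration of $\mathcal{M}_1$ is routed to a state in $Q_{pa}$, every configuration that is an accept configuration of $\mathcal{M}_2$ is routed to a state in $Q_{pr}$, and everything else is routed to $Q_{nh}$.

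The correctness follows from Equations~\ref{eq:postacc} and~\ref{eq:postrej}: for $w\in L$ we have $p^{r}_{\mathcal{R}}(w)=\tfrac{1}{2}\,\mathrm{P}(\mathcal{M}_2\text{ accepts }w)=0$ while $p^{a}_{\mathcal{R}}(w)>0$, so the postselected acceptance probability is exactly $1$; symmetrically for $w\notin L$. The delicate point, and the step I expect to require the most care, is checking that the resulting transition function satisfies the quantum well-formedness conditions of Appendix~\ref{app:wellform}: the two branches live on orthogonal subspaces of the finite-dimensional state space and are evolved by (extensions of) the unitaries of $\mathcal{M}_1$ and $\mathcal{M}_2$, so the direct-sum evolution is unitary, and the final routing to $Q_{pa}\cup Q_{pr}\cup Q_{nh}$ can be realized by a unitary on the enlarged register followed by the standard measurement. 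Since every input reaches $Q_{pa}\cup Q_{pr}$ with nonzero probability (one of $\mathcal{M}_1, \mathcal{M}_2$ accepts with positive probability), $\mathcal{R}$ is a legitimate QFAP, witnessing $L\in\mathsf{PostEQAL}$.
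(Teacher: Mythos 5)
Your proposal is correct and takes essentially the same route as the paper: the paper's proof of this corollary is just the remark that the construction of Theorem \ref{thm:exactchar} ``can be modified easily'' to the real-time setting, and the intended modification is exactly your replacement of the sequential composition of $\mathcal{M}_1$ and $\mathcal{M}_2$ by a parallel direct-sum simulation (with the halting-state relabelings for the forward inclusions kept verbatim). Your verification that the split-and-evolve-in-orthogonal-subspaces map satisfies the well-formedness condition, and that postselection then yields acceptance probability exactly $1$ or $0$, is the right way to fill in the details the paper omits.
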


(Note that it is still open  whether $\mathsf{NQAL} \cap \mathsf{coNQAL}$ 
contains a nonregular language or not, though we do know that $\mathsf{NQAL} \neq \mathsf{coNQAL}  $ \cite{YS10A}.)

Theorem \ref{thm:exactchar} implies, for instance, that since the decision version of the integer factorization problem is in $\mathsf{NP}\cap \mathsf{coNP}$,  there exists an error-free polynomial-time probabilistic TM with postselection which recognizes that language, whereas the only known standard (quantum) algorithm with worst-case polynomial-time for this problem \cite{Sh97} commits bounded error.\footnote{After the completion of this paper, it came to our attention that Brun
and Wilde \cite{BW11} made similar remarks in the context of
computation using postselected closed timelike curves.}

The reader will note that the transformations in the proof of Theorem \ref{thm:exactchar} do not increase the space usage of the machines in question. Using the equivalence of space-bounded machines with and without postselection, we can therefore view those constructions as providing an alternative proof for the facts
\[\mathsf{EPSPACE}(s) = \mathsf{NSPACE}(s)\cap \mathsf{coNSPACE}(s)\]
and
\[\mathsf{EQSPACE}(s) =\mathsf{NQSPACE}(s)\cap \mathsf{coNQSPACE}(s)\]
for all  $ s $.
Making use of the Immerman-Szelepcs\'{e}nyi theorem, we conclude 
\[\mathsf{EPSPACE}(s) =\mathsf{NSPACE}(s)= \mathsf{RSPACE}(s)\]
for all $s=\Omega(\log n)$, 
meaning that error-free probabilistic machines (with or without postselection) are equivalent those operating with one-sided error under these space bounds.\footnote{Note that $\mathsf{EPSPACE}(s) = \mathsf{NSPACE}(s)$ also follows for all $s=\Omega(\log n)$ by a modification of Gill's proof \cite{Gi77} of $\mathsf{RSPACE}(s) = \mathsf{NSPACE}(s)$ to take the Immerman-Szelepcs\'{e}nyi theorem into account, without the need to talk about postselection. We suspect that this is a well-known fact, but we have not seen it stated anywhere.} 

The error-free and one-sided-error modes of computation are also equivalent for PFAPs, since every language in $\mathsf{PostRS} $ obviously has a two-way PFA recognizing it with one-sided error, and those machines are equivalent to deterministic finite automata:
\begin{equation}\label{eq:postrspostes}
\mathsf{PostRS}=\mathsf{REG}= \mathsf{PostES}.
\end{equation}

Things change in the quantum case.

\begin{theorem}\label{thm:onesidedbeatsexact}
 $ \mathsf{PostEQAL} \subsetneq \mathsf{PostRQAL}\subseteq \mathsf{NQAL}$.
\end{theorem}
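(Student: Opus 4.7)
The plan is to handle the two inclusions before turning to the strict separation. The inclusion $\mathsf{PostEQAL}\subseteq\mathsf{PostRQAL}$ is immediate: an error-free recognizer is \emph{a fortiori} a one-sided bounded-error recognizer with error parameter $\epsilon=0$.

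For $\mathsf{PostRQAL}\subseteq\mathsf{NQAL}$ I would adapt the conversion used in the proof of Theorem~\ref{thm:exactchar}. Given a QFAP $\mathcal{P}$ recognizing $L$ with one-sided bounded error, construct a standard QFA $\mathcal{N}$ that retains the postselection-accept states of $\mathcal{P}$ as accept states and reclassifies every other halting state --- postselection-reject as well as nonpostselection-halting --- as a reject state. The acceptance probability of $\mathcal{N}$ on an input $w$ then equals $p^a_{\mathcal{P}}(w)$, which is $0$ precisely on non-members (the postselected acceptance vanishes there, forcing $p^a_{\mathcal{P}}(w)=0$) and strictly positive on members (the postselected acceptance is at least $1-\epsilon>0$). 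Thus $\mathcal{N}$ witnesses $L\in\mathsf{NQAL}$.

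For the strict separation, I would combine Corollary~\ref{corollary:exactfa}, which identifies $\mathsf{PostEQAL}$ with $\mathsf{NQAL}\cap\mathsf{coNQAL}$, with the result $\mathsf{NQAL}\neq\mathsf{coNQAL}$ of \cite{YS10A}: any $L_0\in\mathsf{NQAL}\setminus\mathsf{coNQAL}$ automatically lies outside $\mathsf{PostEQAL}$, so it suffices to certify $L_0\in\mathsf{PostRQAL}$ for a well-chosen $L_0$. My plan is to exhibit a real-time QFA $\mathcal{M}$ for such an $L_0$ with an \emph{isolated} cutpoint $0$, i.e.\ a uniform constant $q_0>0$ with $p^{acc}_{\mathcal{M}}(w)\geq q_0$ for every $w\in L_0$. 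From such an $\mathcal{M}$ I would assemble a QFAP that runs $k$ independent copies of $\mathcal{M}$ in parallel within its finite control and, at the right end-marker, enters a postselection-accept state whenever at least one of the copies has accepted and a postselection-reject state only when every copy has rejected. Non-members make every copy reject with probability $1$, forcing $p^a=0$ and postselected acceptance $0$; on members the probability that all $k$ copies reject is at most $(1-q_0)^k$, which is driven below any pre-fixed $\epsilon$ by picking $k$ sufficiently large, yielding one-sided bounded error.

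The main obstacle is establishing the isolated-cutpoint property for a suitable witness. A generic cutpoint-$0$ QFA may have acceptance probabilities that decay with the input length, and constant-size parallel amplification does not produce bounded error in that regime. The crux of the proof is therefore to inspect the construction behind $\mathsf{NQAL}\neq\mathsf{coNQAL}$ in \cite{YS10A} and either extract a uniform positive lower bound on member acceptance from it, or build a fresh QFA for the same (or some other) language in $\mathsf{NQAL}\setminus\mathsf{coNQAL}$ with this property. Once this is in hand, the amplification above closes the separation.
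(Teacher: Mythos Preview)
Your treatment of the two inclusions is fine and matches the paper: $\mathsf{PostEQAL}\subseteq\mathsf{PostRQAL}$ is immediate, and your conversion for $\mathsf{PostRQAL}\subseteq\mathsf{NQAL}$ is exactly the ``simplification of the proof of Theorem~\ref{thm:exactchar}'' that the paper invokes.

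The strict-separation argument, however, cannot be completed along the lines you propose. In your amplified QFAP every branch ends in a postselection state (accept if at least one copy accepts, reject if all copies reject), so $p^a+p^r\equiv 1$ and the normalization is vacuous: what you have built is effectively a standard real-time QFA. If it recognized $L_0$ with one-sided bounded error you would have $L_0\in\textsc{rt}\mathsf{BQSPACE}(1)=\mathsf{REG}$. But any $L_0\in\mathsf{NQAL}\setminus\mathsf{coNQAL}$ is nonregular, since $\mathsf{REG}\subseteq\mathsf{NQAL}\cap\mathsf{coNQAL}$. Hence no real-time QFA for such an $L_0$ can have a uniform positive lower bound $q_0$ on member acceptance: the ``isolated cutpoint~$0$'' property you are hoping to extract is provably unavailable, not merely unverified. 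The obstacle you flagged is in fact fatal to the approach.

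The paper avoids this entirely. It takes $\overline{L_{pal}}$ as the witness, cites \cite{YS10B} for a real-time QFA \emph{with restart} that recognizes $\overline{L_{pal}}$ with one-sided bounded error, and applies Theorem~\ref{thm:posres} (restart $\Leftrightarrow$ postselection) to place $\overline{L_{pal}}$ in $\mathsf{PostRQAL}$; since $\overline{L_{pal}}\notin\mathsf{NQAL}\cap\mathsf{coNQAL}=\mathsf{PostEQAL}$ by \cite{YS10A} and Corollary~\ref{corollary:exactfa}, the separation follows. The key conceptual point you are missing is that restart/postselection lets the per-round acceptance and rejection probabilities decay with the input while keeping their \emph{ratio} bounded, and by Lemma~\ref{lem:bounded-error} it is this ratio, not an absolute lower bound on $p^a$, that governs the postselected error. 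That is precisely the leverage a standard-QFA amplification cannot supply.
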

\begin{proof}
	The complement of $L_{pal}$ can be recognized with one-sided bounded error by a real-time QFA with restart \cite{YS10B}, and is therefore in  $\mathsf{PostRQAL}$ by Theorem \ref{thm:posres}. But the same language is known to be outside $\mathsf{NQAL} \cap \mathsf{coNQAL} $ \cite{YS10A}. The proper containment follows using Corollary \ref{corollary:exactfa}. The second subset relationship is given by a simplification of the proof of Theorem \ref{thm:exactchar}.
\end{proof}

For greater $t$, the same simplification to the proof of Theorem \ref{thm:exactchar} mentioned
above  yields 
\[ \mathsf{PostRTIME}(t) \subseteq \mathsf{NTIME}(t)), \]
and 
\[ \mathsf{PostRQTIME}(t) \subseteq \mathsf{NQTIME}(t), \]
which is not enough to say whether one-sided error is useful in this range.
 
An open problem regarding classical nondeterministic space is whether $\mathsf{NL}=\mathsf{RL}$ or not. (It is known that $\mathsf{RSPACE}(s)=\mathsf{NSPACE}(s)$ for all $s$, but the proof of this fact \cite{Gi77} involves the construction of a randomized machine that runs in time that is double exponential in terms of $s$. Recall from Section \ref{sec:Prel} that $\mathsf{RL}\equiv\mathsf{RTISP}(poly(n),\log n)$.) We prove the equality in the case where postselection is allowed:

\begin{theorem}\label{thm:nspace}
 For every space constructible function $s$,
$\mathsf{NSPACE}(s) =  \mathsf{PostRTISP}(2^{O(s)},s)$.
\end{theorem}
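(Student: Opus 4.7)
The plan is to prove both inclusions. For $\mathsf{PostRTISP}(2^{O(s)},s) \subseteq \mathsf{NSPACE}(s)$, observe that for a one-sided-error postselection PTM $P$ recognizing $L$, the postselected acceptance probability on $w$ is zero exactly when $p_a(w)=0$, i.e.\ when no branch of $P$ ever reaches a postselection-accept configuration. Hence membership in $L$ reduces to reachability of a postselection-accept configuration in the configuration graph of $P$, which an NTM can decide in $O(s)$ space by simulating $P$ step-by-step, nondeterministically guessing each coin flip, and using an $O(s)$-bit counter to truncate after $2^{O(s)}$ steps so that every branch halts.

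The other direction, $\mathsf{NSPACE}(s) \subseteq \mathsf{PostRTISP}(2^{O(s)},s)$, is the substantial one. Given an NTM $N$ of space $s$ with constant branching factor $k$ (so every shortest accepting computation has length at most $T = 2^{O(s)}$), I would build a two-stage postselection PTM $P$ of space $O(s)$ and time $2^{O(s)}$. Stage one flips $T' := \lceil T\log_2 k\rceil + 2$ fair coins by iterating an $O(s)$-bit counter; on one fixed outcome (probability $\epsilon := 2^{-T'}$), $P$ enters a postselection-reject state. Stage two probabilistically simulates $N$, resolving each nondeterministic choice uniformly and using a second $O(s)$-bit step counter; it then enters a postselection-accept state if the simulation accepts and a nonpostselection-halt state otherwise.

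By construction $p_r(w)=\epsilon$ is independent of $w$, while $p_a(w)=(1-\epsilon)\,q(w)$ with $q(w)$ the acceptance probability of the random simulation of $N$. Crucially, $q(w)=0$ iff $N$ has no accepting path on $w$, iff $w\notin L$, giving the one-sided condition $p_a(w)=0$ for $w\notin L$ for free. For $w\in L$ any single accepting path of length $\leq T$ contributes at least $k^{-T}$ to $q(w)$, and $T'$ was chosen so that $\epsilon \leq k^{-T}/4$; hence $p_a(w) \geq 3\,p_r(w)$, and the postselected acceptance probability is at least $3/4$, giving one-sided bounded error.

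The main obstacle is calibrating $\epsilon$: it must be strictly smaller than the smallest nonzero value of $q(w)$ on yes-instances (so that yes-instances survive normalisation with acceptance probability bounded away from $1/2$), yet still achievable by an $O(s)$-space PTM. The doubly-exponentially-small probability $2^{-T'}$ with $T'=2^{O(s)}$ is exactly what a $T'$-long preamble of coin flips, controlled by an $O(s)$-bit counter, can produce; this is why the $2^{O(s)}$ time bound is both necessary for the construction to go through and sufficient for the simulation of $N$ to run to completion.
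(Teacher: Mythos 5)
Your proof is correct and follows essentially the same route as the paper's: the substantial direction is the same construction (a preamble that enters a postselection-reject state with a probability $2^{-2^{O(s)}}$ calibrated below the smallest nonzero acceptance probability of a uniform certificate-guessing simulation of $N$, which accepts on a valid certificate and otherwise halts in a nonpostselection state), and your calibration $\epsilon\le k^{-T}/4$ matches the paper's choice of $2^{-2l}$. The only cosmetic difference is in the easy inclusion, where the paper invokes Theorem \ref{thm:posres} together with $\mathsf{RSPACE}(s)=\mathsf{NSPACE}(s)$ while you argue directly via reachability of a postselection-accept configuration; both are fine.
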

\begin{proof}
We first note that every nondeterministic machine which uses space $ s(n)=\Omega(\log n)$ must have certificates of length at most $l=2^{O(s(n))}$ for every accepted string of length $n$, and every such machine where  $s$ is space-constructible   can incorporate a counter that helps cut the  execution off if the runtime exceeds $l$, so $\mathsf{NSPACE}(s)=\mathsf{NTISP}(2^{O(s)},s)$.

By Theorem \ref{thm:posres}, $\mathsf{PostRTISP}(2^{O(s)},s)$ is contained in $\mathsf{RSPACE}(s)$, which, as already mentioned, equals $\mathsf{NSPACE}(s)$ for every $s$. In the other direction, for any nondeterministic TM $\mathcal{N}$ that uses space $s$, we build a PTM with postselection $\mathcal{A}$ as follows: Upon reading an input of length $n$, $\mathcal{A}$ first computes the maximum length $l$, mentioned in the previous paragraph, of the shortest certificate that $\mathcal{N}$ could possibly use for such an input. $\mathcal{A}$ then rejects the input with probability $2^{-2l}$. With the remaining probability, $\mathcal{A}$ randomly guesses a certificate of length at most $l$ for the input, checking it in at most $2^{O(s)}$ time in space $s$, accepting if it is a valid certificate, and halting in a nonpostselection state otherwise. $\mathcal{A}$ rejects nonmembers of the language of $\mathcal{N}$ with probability 1, and accepts members with probability greater than $\frac{2}{3}$. 
\end{proof}

Finally, we would like to see whether the permission to commit two-sided, rather than one-sided, error gives any advantage to machines with postselection. For any language $L$, let $\overline{L}$ denote the complement of $L$.

\begin{theorem}\label{thm:2posqbeats1q}
	$ \mathsf{PostRQAL}  \subsetneq \mathsf{PostBQAL} $.
\end{theorem}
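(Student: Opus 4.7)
The plan is to use $L_{pal}$ as the witness language separating the two classes, exploiting results already established in the preceding theorems. Containment $\mathsf{PostRQAL} \subseteq \mathsf{PostBQAL}$ is immediate from the definitions, since one-sided bounded-error recognition is a restricted form of two-sided bounded-error recognition (the only extra condition being cutpoint zero for nonmembers), so nothing has to be constructed for the $\subseteq$ direction.

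For strictness, I would first invoke Theorem \ref{thm:posqbeatsq} to place $L_{pal}$ in $\mathsf{Post}\textsc{rt}\mathsf{BQSPACE}(1) \subseteq \mathsf{PostBQAL}$. Then I would argue that $L_{pal} \notin \mathsf{PostRQAL}$ by contradiction: recall from the proof of Theorem \ref{thm:onesidedbeatsexact} that $\overline{L_{pal}}$ is already in $\mathsf{PostRQAL}$, via the real-time QFA with restart for $\overline{L_{pal}}$ converted through Theorem \ref{thm:posres}. If $L_{pal}$ were also in $\mathsf{PostRQAL}$, then both $L_{pal}$ and its complement would lie in $\mathsf{PostRQAL}$, and using the inclusion $\mathsf{PostRQAL} \subseteq \mathsf{NQAL}$ (the second subset assertion of Theorem \ref{thm:onesidedbeatsexact}), we would conclude $L_{pal} \in \mathsf{NQAL} \cap \mathsf{coNQAL}$, directly contradicting the known fact cited in the proof of Theorem \ref{thm:onesidedbeatsexact} that $L_{pal} \notin \mathsf{NQAL} \cap \mathsf{coNQAL}$ \cite{YS10A}.

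There is essentially no hard step here: the proof is a short diagram-chase through the classes already related in the previous results. The only point to check carefully is that the proof of Theorem \ref{thm:onesidedbeatsexact} really does give $\overline{L_{pal}} \in \mathsf{PostRQAL}$ (not just $\overline{L_{pal}} \in \mathsf{NQAL}$) and that it really establishes $\mathsf{PostRQAL} \subseteq \mathsf{NQAL}$, so that both ingredients of the contradiction are available. Given those, the separation follows in one line, and I would present the argument in at most a few sentences, without additional construction.
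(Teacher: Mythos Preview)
Your argument is correct. The containment is immediate, and for strictness your chain works: $L_{pal} \in \mathsf{PostBQAL}$ by Theorem~\ref{thm:posqbeatsq}; if $L_{pal}$ were also in $\mathsf{PostRQAL}$ then, together with $\overline{L_{pal}} \in \mathsf{PostRQAL}$ and $\mathsf{PostRQAL} \subseteq \mathsf{NQAL}$ (both from Theorem~\ref{thm:onesidedbeatsexact}), you would get $L_{pal} \in \mathsf{NQAL} \cap \mathsf{coNQAL}$, contradicting the fact cited there. (In fact you could shorten this further: since $L_{pal} \in \mathsf{coNQAL} \setminus \mathsf{NQAL}$, the inclusion $\mathsf{PostRQAL} \subseteq \mathsf{NQAL}$ alone already rules out $L_{pal} \in \mathsf{PostRQAL}$, without the detour through $\overline{L_{pal}}$.)

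The paper takes a different route: it uses the language $L_{eq\overline{eq}} = \{aw_1 \cup bw_2 \mid w_1 \in L_{eq},\, w_2 \in \overline{L_{eq}}\}$ as the separating witness. This language is known not to be in $\mathsf{NQAL}$, hence not in $\mathsf{PostRQAL}$; for membership in $\mathsf{PostBQAL}$ the paper actually constructs a \emph{classical} PFAP (branching on the first symbol to PFAPs for $L_{eq}$ and $\overline{L_{eq}}$), placing $L_{eq\overline{eq}}$ in $\mathsf{PostBS} \subsetneq \mathsf{PostBQAL}$. Your approach is shorter and needs no new construction, relying entirely on facts already assembled for $L_{pal}$. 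The paper's approach costs a small construction but buys something extra: the separating language lies already in the classical class $\mathsf{PostBS}$, so the gap between one-sided and two-sided error for quantum postselection automata is witnessed by a language recognizable even by probabilistic postselection automata.
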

\begin{proof}	
	The language $ L_{eq\overline{eq}}  = \{ aw_{1} \cup bw_{2} \mid w_{1} \in L_{eq}, w_{2} \in \overline{L_{eq}} \} $ is not a member of $ \mathsf{NQAL} $ \cite{YS10A}, and therefore also not in  $ \mathsf{PostRQAL} $, by Theorem \ref{thm:onesidedbeatsexact}. We will construct a PFAP $\mathcal{P}$ that recognizes $ L_{eq\overline{eq}}$ with two-sided bounded error.
	
	Since $L_{eq} \in \mathsf{PostBS}$ (Theorem \ref{thm:postpbeatsp}), its complement, $\overline{L_{eq}}$, is also in $\mathsf{PostBS}$ (Theorem \ref{thm:post-closure}). Let the corresponding PFAPs be called $\mathcal{M}_1$ and $\mathcal{M}_2$, respectively. $\mathcal{P}$ handles inputs shorter than two symbols deterministically. For longer inputs, it passes control to $\mathcal{M}_1$ or $\mathcal{M}_2$, depending on whether the first symbol is an $a$ or a $b$.
	
	Since $ L_{eq\overline{eq}} \in \mathsf{PostBS} \subsetneq \mathsf{PostBQAL}$, the statement has been proven.
\end{proof}

In the probabilistic case, the superiority of finite automata with two-sided error follows from Theorem \ref{thm:postpbeatsp} and Equation \ref{eq:postrspostes}, but we can do better than this: No  nondeterministic TM (and therefore, no PTM
with postselection that commits one-sided bounded error) using $o(\log n)$ space can recognize a nonregular deterministic context-free language \cite{AGM92}. Since $L_{eq}$ (Theorem \ref{thm:postpbeatsp}) is such a language, for which there exists a bounded-error constant-space PTM, we conclude that the advantage given by the flexibility to  err both ways extends to all sublogarithmic bounds for two-way classical machines (with or without postselection).

\section{Riga quantum finite automata with postselection} \label{sec:LPostFA}

The first automaton-based model of computation with postselection was presented by L\={a}ce, Scegulnaja-Dubrovska and Freivalds, \cite{LSF09} who were interested exclusively in quantum finite automata, that is, real-time, constant-space QTMs. The main difference between L\={a}ce \textit{et al.}'s way of modeling postselection and our approach is that the transitions of a \textit{Riga QFA with postselection} (RQFAP), as we name their model, are not assumed to lead the machine to at least one postselection state with nonzero probability. RQFAPs have the additional unrealistic capability of detecting if the total probability of postselection states is zero at the end of the processing of the input, and accumulating all probability in a single output in such a case.

Although the motivation for this feature is not explained in
\cite{LSF09,SLF10}, such an approach may be seen as an attempt to compensate for some fundamental weaknesses of finite automata.
In many computational models with bigger space bounds, 
one can modify a machine employing the Riga approach without changing the recognized language 
so that the postselection state set will have nonzero probability for any input string. 
This is achieved by just creating some
computational paths that end up in the postselection set with
sufficiently small probabilities so that
their inclusion does not change the acceptance probabilities of strings that lead the original machine to the postselection set significantly. These paths can be used to accept or to reject
the input as desired whenever there is zero probability of observing the other postselection states.
Unfortunately, we do not know how to implement this construction
in quantum finite automata with arbitrary amplitudes, so we prefer our model, in which the only nonstandard capability conferred to the machines is postselection, to the Riga version.

In this section, we will  prove that the Riga model is not equivalent to ours in 	
	computational power. More results on the properties of these machines can be found in Appendix \ref{app:lpostfa}.

We will consider Riga finite automata with postselection\footnote{The original definitions of RQFAPs in \cite{LSF09} are based on weaker QFA variants. Replacing those with the machines of Appendix \ref{app:postfa} does not change the model 
\cite{YS11B}). The classical probabilistic versions of these machines are studied here for the first time.} as finite-state machines of the type introduced in Section \ref{sec:Posdefs} (also see Appendix \ref{app:postfa}) with an additional component $ \chi \in \{A,R\} $, such that
whenever the postselection probability is zero for a given input string $ w \in \Sigma^{*} $,
\begin{itemize}
       \item $ w $ is accepted with probability 1 if $ \chi = A $,
       \item $ w $ is rejected with probability 1 if $ \chi = R $.
\end{itemize}
The related language classes are named by prefixing the letter $\mathsf{\frak{R}}$ to the corresponding class name from Table \ref{tbl:postrt-class}.

In the classical case, Riga machines are equal in power to ours, that is,
$ \mathsf{\frak{R}PostBS} $ = $ \mathsf{PostBS} $ and $ \mathsf{PostES} $=$ \mathsf{\frak{R}PostES} $ 
(see Appendix \ref{app:lpostfa} for a proof).

 Recall from Section \ref{sec:powerofpost} that 
 $ \mathsf{PostEQAL}=\mathsf{NQAL} \cap \mathsf{coNQAL}$. 
 We will now show that the corresponding class for RQFAPs is larger.

\begin{theorem}\label{thm:nuncon}
	$ \mathsf{\frak{R}PostEQAL} = \mathsf{NQAL} \cup \mathsf{coNQAL} $.
\end{theorem}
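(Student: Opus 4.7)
The plan is to prove the equality by the two inclusions separately, using the fact that the Riga convention relates ``zero postselection probability'' with an unconditional acceptance or rejection, which is exactly the two-sided asymmetry of $\mathsf{NQAL}$ vs.\ $\mathsf{coNQAL}$.

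For the inclusion $\mathsf{NQAL} \cup \mathsf{coNQAL} \subseteq \mathsf{\frak{R}PostEQAL}$, I would take any $L \in \mathsf{NQAL}$ and a witnessing QFA $\mathcal{N}$ that accepts exactly the strings of $L$ with positive probability. Build an RQFAP $\mathcal{M}$ with the same transitions, but relabel every original accept state as a postselection accept state, every original reject state as a nonpostselection halting state, and set $\chi = R$. Then for $w \in L$ the postselection accept probability is positive and the postselection reject probability is zero, so the normalized acceptance probability is exactly $1$; for $w \notin L$ the total postselection probability is zero, and the Riga convention forces rejection with probability $1$. This shows $L \in \mathsf{\frak{R}PostEQAL}$. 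The same construction with postselection accept/reject swapped and $\chi = A$ handles any $L \in \mathsf{coNQAL}$.

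For the inclusion $\mathsf{\frak{R}PostEQAL} \subseteq \mathsf{NQAL} \cup \mathsf{coNQAL}$, I would start from an error-free RQFAP $\mathcal{M}$ recognizing $L$ and split on the value of its component $\chi$. If $\chi = R$: for $w \in L$ the error-free condition forces the normalized acceptance probability to be $1$, which rules out the zero-postselection-probability case (that would give rejection) and forces the postselection accept probability to be strictly positive; for $w \notin L$ the normalized acceptance probability is $0$, so the postselection accept probability is zero whether or not the total postselection probability vanishes. Hence the QFA obtained from $\mathcal{M}$ by promoting the postselection accept states to accept states and everything else to reject states witnesses $L \in \mathsf{NQAL}$. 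The symmetric analysis for $\chi = A$ shows, by examining postselection reject probabilities, that $\overline{L} \in \mathsf{NQAL}$, i.e.\ $L \in \mathsf{coNQAL}$.

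The main conceptual obstacle is the second inclusion, and specifically the case $w \in L$ with $\chi = R$ (and its mirror): one must check that the error-free hypothesis really does preclude the possibility that some input with total postselection probability zero is supposed to be accepted, because the Riga convention would then force rejection. Once this observation is in place, the constructions are essentially bookkeeping on which transitions become accepting in the derived nondeterministic QFA, and the fact that nondeterministic QFA acceptance is defined by positive probability (rather than any quantitative bound) means no amplification or well-formedness issue has to be addressed.
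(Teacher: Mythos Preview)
Your proposal is correct and follows essentially the same approach as the paper's own proof: both directions are handled by the same relabeling of accepting states as postselection accept states (with $\chi=R$ for $\mathsf{NQAL}$, and the dual for $\mathsf{coNQAL}$), and the reverse inclusion is argued by the same case split on $\chi$, using exactly the observation you flag as the main obstacle. The paper's write-up is terser, but the constructions and case analyses coincide.
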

\begin{proof}
	For $ L \in \mathsf{NQAL} $, designate the accepting states of the QFA recognizing 
	$ L $ with cutpoint zero as postselection accepting states with $ \chi = R $. 
	(There are no postselection reject states.)
	
	For $ L \in \mathsf{coNQAL} $, designate the accepting states of the QFA recognizing 
	the complement of $ L $ with cutpoint zero
	as postselection rejecting states with $ \chi = A $. 
	(There are no postselection accept states.)
	
	Finally, let $ L $ be a member of $ \mathsf{\frak{R}PostEQAL}  $ and $ \mathcal{M} $ 
	be a RQFAP recognizing $ L $ with zero error.
	If $ \chi=R $, we have that, for all $ w \in L $, $ p^{a}_{\mathcal{M}}(w) $ is nonzero,
	 and $ p^{r}_{\mathcal{M}}(w) = 0 $, and for all $w \notin L$,  $ p^{a}_{\mathcal{M}}(w)=0 $.
	Thus, we can design a real-time QFA recognizing $ L $ with cutpoint zero.
	If $ \chi=A $, we can similarly design a real-time
QFA recognizing the complement of $L$ with cutpoint
zero.
\end{proof}

By using the fact\footnote{$ L_{pal} $ was proven to be in $ \mathsf{\frak{R}PostEQAL} $ for the first time in \cite{LSF09}.} that $ L_{pal} \in $ $ \mathsf{coNQAL} $ $ \setminus $ $ \mathsf{NQAL} $ \cite{YS10A},
we can state that RQFAPs are strictly more powerful than our version of 
real-time QFAs with postselection, at least in the error-free mode:

\begin{corollary}
	$ \mathsf{PostEQAL}  $ $ \subsetneq $ $ \mathsf{\mathsf{\frak{R}PostEQAL}} $.
\end{corollary}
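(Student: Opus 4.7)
The corollary is an almost immediate consequence of the two characterization results already in hand: Corollary \ref{corollary:exactfa} says $\mathsf{PostEQAL} = \mathsf{NQAL}\cap\mathsf{coNQAL}$, and Theorem \ref{thm:nuncon} says $\mathsf{\frak{R}PostEQAL} = \mathsf{NQAL}\cup\mathsf{coNQAL}$. The plan is to first observe the trivial set-theoretic inclusion $\mathsf{NQAL}\cap\mathsf{coNQAL} \subseteq \mathsf{NQAL}\cup\mathsf{coNQAL}$, which immediately gives the containment $\mathsf{PostEQAL}\subseteq\mathsf{\frak{R}PostEQAL}$.

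For the strict separation, I would invoke the witness language $L_{pal}$, which is already referenced in the excerpt: by the result cited just before Theorem \ref{thm:nuncon} (the Yakary\i lmaz--Say paper \cite{YS10A}), $L_{pal}\in\mathsf{coNQAL}\setminus\mathsf{NQAL}$. In particular, $L_{pal}\in\mathsf{NQAL}\cup\mathsf{coNQAL}=\mathsf{\frak{R}PostEQAL}$, while $L_{pal}\notin\mathsf{NQAL}\cap\mathsf{coNQAL}=\mathsf{PostEQAL}$. Thus $L_{pal}$ separates the two classes.

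There is essentially no obstacle, since both characterizations have already been proven and the separating language has already been exhibited in the literature; the only thing to be careful about is citing the exact membership $L_{pal}\in\mathsf{coNQAL}\setminus\mathsf{NQAL}$ rather than just the weaker statement $L_{pal}\notin\mathsf{NQAL}\cap\mathsf{coNQAL}$, so that the witness lies cleanly inside the union but strictly outside the intersection. The proof can therefore be stated in a couple of sentences, with the two characterization results doing all the real work.
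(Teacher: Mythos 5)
Your proof is correct and is essentially identical to the paper's argument: the paper also combines the characterizations $\mathsf{PostEQAL}=\mathsf{NQAL}\cap\mathsf{coNQAL}$ and $\mathsf{\frak{R}PostEQAL}=\mathsf{NQAL}\cup\mathsf{coNQAL}$ with the fact that $L_{pal}\in\mathsf{coNQAL}\setminus\mathsf{NQAL}$ to separate the two classes. Nothing further is needed.
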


In the bounded-error case, it is not known whether RQFAPs can outperform QFAPs or not.

\section{Concluding remarks} \label{sec:ConcludingRemarks}

Figure  \ref{fig:rt-rel} summarizes our results on constant-memory computers. Dotted arrows indicate subset 
relationships, and unbroken arrows represent the cases where it is known that the inclusion is proper. Note that the subscript $\mathbb{R}$ appearing in the TM-related class names indicates that the transition probabilities/amplitudes are allowed to be unrestricted real numbers, making them correspond to two-way finite automata. In comparisons involving probabilistic computers, we were able to demonstrate the superiority results in the figure for some larger time and space bounds.

\begin{figure}[h!]
	\begin{center}
	\fbox{ \footnotesize
	\begin{minipage}{0.95\textwidth}
		\begin{center}
		~~\\~~\\
		\ifx\JPicScale\undefined\def\JPicScale{1}\fi
		\unitlength \JPicScale mm
		\begin{picture}(150,105)(0,0)
			\put(35,4){$ \mathsf{REG = PostES = PostRS = \frak{R}PostES = \frak{R}PostRS } $}
			\put(35,0){$ \mathsf{ = EPSPACE_{\mathbb{R}}(1) = RSPACE_{\mathbb{R}}(1) = NSPACE_{\mathbb{R}}(1) } $}

			\put(35,15){$ \mathsf{NQAL \cap coNQAL} $}
			\put(35,19){$ \mathsf{PostEQAL} = $}
						
			\multiput(50,7)(0,1){7}{$ . $} \put(50.3,13.5){\vector(0,1){1.5}}
			
			\put(0,30){$ \mathsf{PostBS = \frak{R}PostBS} $}			
			\put(100,30){$ \mathsf{NQSPACE_{\mathbb{R}}(1) \cap coNQSPACE_{\mathbb{R}}(1) } $}
			\put(100,34){$ \mathsf{ EQSPACE_{\mathbb{R}}(1) = } $}
			\put(45,30){$ \mathsf{PostRQAL} $}
			\put(70,30){$ \mathsf{NQAL} $}

			\put(35,7){\vector(-1,2){11}}
			\multiput(103,7)(1,3){8}{$ . $} \put(110,26){\vector(1,3){1}}
			
			\multiput(55,19)(3,1){15}{$ . $} \put(98,33.4){\vector(3,1){1.5}}
			\put(50,22){\vector(0,1){8}}			
			
			\put(65,45){$ \mathsf{NQAL \cup coNQAL} $}
			\put(65,49){$ \mathsf{ \frak{R}PostEQAL = } $}
			\put(100,45){$ \mathsf{RQSPACE_{\mathbb{R}}(1)} $}
			
			\multiput(59,31)(1,0){10}{$ . $} \put(68.3,31.2){\vector(1,0){1.5}}
			
			\put(75,33){\vector(0,1){11}}
			\multiput(60,32)(3,1){13}{$ . $} \put(98,45){\vector(3,1){1.5}}
			\multiput(110,37)(0,1){8}{$ . $} \put(110.2,43.5){\vector(0,1){1.5}}
			
			\put(7,60){$ \mathsf{BPSPACE_{\mathbb{R}}(1) } $}
			\put(45,60){$ \mathsf{PostBQAL} $}
			\put(65,60){$ \mathsf{ \frak{R}PostRQAL } $}
			\put(100,60){$ \mathsf{NQSPACE_{\mathbb{R}}(1)} $}
			\multiput(20,32)(0,1){31}{$ . $} \put(20.3,58){\vector(0,1){1.5}}
			
			\put(21,32){\vector(1,1){27.5}}
			\multiput(78,32)(1,1){27}{$ . $} \put(104,58){\vector(1,1){1.5}}
			\put(50,33){\vector(0,1){27}}
			
			\put(75,51){\vector(0,1){8.5}}
			\multiput(45,63)(-2,1){11}{$ . $} \put(25,73.5){\vector(-2,1){1.5}}
			\multiput(110,48)(0,1){11}{$ . $} \put(110.2,58.5){\vector(0,1){1.5}}
			
			\put(7,75){$ \mathsf{BQSPACE_{\mathbb{R}}(1) } $}
			\put(45,75){$ \mathsf{ QAL = S } $}
			\put(65,75){$ \mathsf{ \frak{R}PostBQAL } $}
			
			\put(20,63){\vector(0,1){11.5}}
			\put(50,63){\vector(0,1){11.5}}
			\multiput(75,63)(0,1){11}{$ . $} \put(75.2,73.5){\vector(0,1){1.5}}
			\multiput(55,63)(1,1){11}{$ . $} \put(65.4,73.5){\vector(1,1){1.5}}
			\put(25,63){\vector(2,1){21}}
			
			\put(65,90){$ \mathsf{ PrSPACE_{\mathbb{R}} (1) } $}
			\put(65,94){$ \mathsf{ uQAL = uS = } $}
			
			\multiput(75,78)(0,1){11}{$ . $} \put(75.2,88.5){\vector(0,1){1.5}}
			\multiput(56,78)(1,1){11}{$ . $} \put(66,88){\vector(1,1){1.5}}
			
			\put(65,105){$ \mathsf{ PrQSPACE_{\mathbb{R}} (1) } $}
			\put(75,96){\vector(0,1){9}}

			\multiput(103,63)(-1,2){22}{$ . $} \put(84,101){\vector(-1,2){1.5}}
			
			\multiput(28,78)(3,2){13}{$ . $} \put(66,103){\vector(3,2){1.5}}
			
			
		\end{picture}
		\end{center}
	\end{minipage}}
	\end{center}
	\caption{The relationships among standard and postselected versions of classical and quantum constant-memory classes}
	\vskip\baselineskip
	\label{fig:rt-rel}
\end{figure}

Defining the complexity classes using a more restricted model of QTMs, Watrous \cite{Wa99,Wa03} proved that $\mathsf{PrSPACE}(s)=  \mathsf{PrQSPACE}(s)$, and $\mathsf{NQSPACE}(s) = \mathsf{RQSPACE}(s)$ for all space-constructible $s=\Omega(\log n)$. We do not know if these equalities remains valid for the TM model used in this paper. Nor do we know whether quantum nondeterministic space is closed under complement or not, even in Watrous' restricted setup. (Note that a key step in our characterization of classical error-free space-bounded  classes was the closure of classical nondeterministic space under complementation.)

A fact that we know only for the $\mathsf{EPSPACE}$ family among the error-free classes is the existence of a hierarchy among them. This follows trivially from their identity with $\mathsf{NSPACE}$ classes, and the nondeterministic space hierarchy theorem. All other error-free  classes that we considered in Theorem \ref{thm:exactchar} were shown to equal the intersection of the corresponding ``nondeterministic" class and its complement. The classical nondeterministic time hierarchy is well known. Although we have not seen it stated anywhere, tight hierarchies for the nondeterministic quantum time- and space-bounded classes are also easily shown to exist, as we briefly describe below.

Using his QTM model mentioned above, Watrous \cite{Wa99} also proved for all space-constructible $s=\Omega(\log n)$ that $\mathsf{NQSPACE}(s) = \mathsf{coC_{=}SPACE}(s)$. \v{Z}\'{a}k's proof of the nondeterministic time hierarchy \cite{Za83,FS06} is general enough to apply to all of the $\mathsf{C_{=}SPACE}$, $\mathsf{coC_{=}SPACE}$, $\mathsf{C_{=}TIME}$, and $\mathsf{coC_{=}TIME}$ classes, where the last two families of classes are straightforward generalizations of $\mathsf{C_{=}P}$ and $\mathsf{coC_{=}P}$.\footnote{We thank Lance Fortnow for explaining this to us.} Finally, it is easy to generalize the proof that $\mathsf{NQP}=\mathsf{coC_{=}P}$ \cite{YY99} to hold for the families $\mathsf{NQTIME}$ and $\mathsf{coC_{=}TIME}$ as well.

Hierarchy theorems for intersection classes like the ones we have in the characterizations of  $\mathsf{EPTIME}$, $\mathsf{EQTIME}$, and $\mathsf{EQSPACE}$ given by Theorem \ref{thm:exactchar} are not presently known. Another important open question concerns the relationships among the various $\mathsf{TIME}$ and $\mathsf{QTIME}$ classes, and their postselected versions.
	
\section*{Acknowledgements} \label{sec:Acknowledgement}

We thank R\={u}si\c{n}\v{s} Freivalds for pointing us to the subject of this paper, 
and kindly providing us copies of references \cite{LSF09,SLF10}. We are grateful to Lance Fortnow for his  help about the hierarchies of counting classes, and to John Watrous, Scott Aaronson, Tomoyuki Yamakami, and Greg Kuperberg for their helpful answers to our questions. We also thank an anonymous referee for insightful comments on a previous version of this manuscript.

\appendix

\section{Well-formedness of quantum machines} \label{app:wellform}

For any input string $x$, let  $\mathcal{C}_x$ be a suitably ordered list of all the reachable configurations of the space-bounded QTM under consideration. The transition function $ \delta $ (see the definition in Section \ref{sec:Prel}) induces a set of $ | \mathcal{C}_x | \times | \mathcal{C}_x |$ configuration transition
matrices, $ \{ E_{\omega} \mid \omega \in \Omega \} $, where the $ (i,j)^{th} $ entry of $ E_{\omega}$,
the amplitude of the transition  from $ c_{j} $ to $ c_{i} $ by
writing $ \omega \in \Omega $ on the register,
is defined by $ \delta $ whenever the $j$th configuration $ c_{j} $ is reachable from the $i$th configuration $ c_{i}
$  in one step, and is zero otherwise. The QTM is said to be well-formed if 
\begin{equation}\label{equation:wf}
      \sum_{\omega \in \Omega} E_{\omega}^{\dagger}E_{\omega} = I.
\end{equation}

As described in \cite{YS11A}, an easy way of checking whether a QTM is well-formed is to verify if the columns of the $ | \mathcal{C}_x | |
\Omega | \times | \mathcal{C}_x | $-dimensional
matrix $ \mathsf{E} $ (Figure \ref{figure:matrix-E}) obtained by concatenating all the $ E_{\omega} $s one under the other,
form an orthonormal set. In the figure, and in the rest of this section, we use the convention that $\Delta=\{c,\tau_2,\tau_3,\ldots,\tau_k\}$ for $k>2$, $c$ is sometimes referred to as $\tau_1$, and the elements of $\Omega$ are ordered so that those corresponding to the same $\tau$ value are grouped together for each $\tau \in \Delta$.

\begin{center}
\begin{figure}[h!]
	\centering	
	\begin{minipage}{0.7\textwidth}
		\[
			\begin{array}{rl}
				& \begin{array}{rccccc} & ~~~~~ c_{1} & ~~ c_{2} &  ~ \ldots ~ & ~ c_{|\mathcal{C}_x|} \end{array}
				\\
				\tau_1 = c \left \lbrace \begin{array}{c} \\ \\ \\ \\ \\ \\ \\ \\ \\ \\ \\ \end{array} \right.
					& \mspace{-10mu}	
						\begin{array}{rccccc}
							\hline
							\multicolumn{1}{c|}{c_{1}} & & & & \multicolumn{1}{c|}{} \\
							\multicolumn{1}{c|}{c_{2}} & & & & \multicolumn{1}{c|}{} \\
							\multicolumn{1}{c|}{\vdots} & 
								\multicolumn{4}{c|}{ \hspace*{30pt} E_{\omega_{1}} \hspace*{29pt} }  \\
							\multicolumn{1}{c|}{c_{|\mathcal{C}_x|}} & & & & \multicolumn{1}{c|}{} \\
							\hline	
							\multicolumn{1}{c|}{c_{1}} & & & & \multicolumn{1}{c|}{} \\
							\multicolumn{1}{c|}{c_{2}} & & & & \multicolumn{1}{c|}{} \\
							\multicolumn{1}{c|}{\vdots} & \multicolumn{4}{c|}{ E_{\omega_{2}} } \\
							\multicolumn{1}{c|}{c_{|\mathcal{C}_x|}} & & & & \multicolumn{1}{c|}{} \\
							\hline
							\multicolumn{1}{c|}{\vdots} & \multicolumn{4}{c|}{\vdots } \\							
							\hline	
						\end{array}
				\\
				\tau_2 \left \lbrace \begin{array}{c} \\ \\ \\ \\ \\ \\  \end{array} \right.
					& \mspace{-10mu}
						\begin{array}{rccccc}
							\multicolumn{1}{c|}{c_{1}} & & & & \multicolumn{1}{c|}{} \\
							\multicolumn{1}{c|}{c_{2}} & & & & \multicolumn{1}{c|}{} \\
							\multicolumn{1}{c|}{\vdots} & 
								\multicolumn{4}{c|}{ \hspace*{30pt} E_{\omega_{|\Omega_c|+1}} \hspace*{10.5pt} } \\
							\multicolumn{1}{c|}{c_{|\mathcal{C}_x|}} & & & & \multicolumn{1}{c|}{} \\
							\hline
							\multicolumn{1}{c|}{\vdots} & \multicolumn{4}{c|}{\vdots } \\
							\hline	
						\end{array}
				\\
					& \mspace{-17mu}
						\begin{array}{rccccc}
						\multicolumn{1}{c|}{~~~\vdots~~~} & 
							\multicolumn{4}{c|}{ \hspace*{37pt} \vdots \hspace*{36pt} } \\
						\multicolumn{1}{c|}{~~~\vdots~~~} & 
							\multicolumn{4}{c|}{  \vdots  }
					\end{array}
				\\
				\tau_k \left \lbrace \begin{array}{c} \\ \\ \\ \\ \\ \\  \end{array} \right.
					& \mspace{-10mu}
						\begin{array}{rccccc}
							\hline
							\multicolumn{1}{c|}{\vdots} & \multicolumn{4}{c|}{\vdots } \\
							\hline	
							\multicolumn{1}{c|}{c_{1}} & & & & \multicolumn{1}{c|}{} \\
							\multicolumn{1}{c|}{c_{2}} & & & & \multicolumn{1}{c|}{} \\
							\multicolumn{1}{c|}{\vdots} & 
								\multicolumn{4}{c|}{ \hspace*{30pt} E_{\omega_{|\Omega|}} \hspace*{23pt} } \\
							\multicolumn{1}{c|}{c_{|\mathcal{C}_x|}} & & & & \multicolumn{1}{c|}{} \\
							\hline	
						\end{array}
			\end{array}
		\]
	\end{minipage}
	\caption{Matrix $ \mathsf{E} $}
	\label{figure:matrix-E}
\end{figure}
\end{center}

The QTM definition in \cite{YS11A} only requires that the finite register alphabet $\Omega$ is partitioned in terms of the possible outcomes in $\Delta$, and imposes no such condition on the state set $Q$, as we have been doing in this paper. We will now show that our additional restrictions do not cause a decrease of power, by describing a procedure for building a new QTM $ \mathcal{M}'$ which obeys these restrictions and recognizes the same language as a given QTM $ \mathcal{M}=(Q,\Sigma,\Gamma,\Omega,\delta,q_{1},\Delta) $, defined in  the more relaxed format of \cite{YS11A}:

Let   $ \mathcal{M}' = \{ Q',\Sigma,\Gamma,\Omega,\delta',(q_{1},\omega_1),\Delta\} $, where
\begin{itemize}
\item $ Q' = Q \times \Delta $, 
\item $ Q'_{\tau} = \{ (q,\tau) \mid q \in Q \} $ for all $\tau \in \Delta$, and,
\item $\omega_1$ is the initial register symbol of both $ \mathcal{M}$  and $ \mathcal{M}'$. 
\end{itemize}

We wish to define $ \delta' $ so that
it will be guaranteed that whenever a value $ \tau $ is observed as the result of a measurement of the register, all configurations with nonzero amplitudes will have as their state component  the value $ (q,\tau) $ for some $q \in Q$.
Thus, we can have a partition of the state set as well according to $ \Delta $.

In all the Turing machine variants in this paper, the only member of $\Delta$ that does \textit{not} indicate that the computation has halted is $c$. Transitions from configurations that the machine is in when any move outcome other than $c$ is observed will never actually be performed, since the machine will not be running any longer. Of course, the amplitudes corresponding to those transitions must still be selected so that Equation \ref{equation:wf} is satisfied. We make sure that the transitions from configurations with state component $ (q,c) $ for some $q \in Q$ mimic the corresponding transitions in $ \mathcal{M}$ while entering correctly categorized states of $ \mathcal{M}'$, and fill in the rest of $\delta'$ so that  $ \mathcal{M}'$ is well-formed, as described below.

For every $\tau \in \Delta$, if $ \mathcal{M} $ contains the transition 
\[\delta(q,\sigma,\gamma,q',d_i,\gamma',d_w,\omega) = \alpha, \]
where $ \omega \in \Omega_{\tau}$, we add the transition 
\[\delta'((q,c),\sigma,\gamma,(q',\tau),d_i,\gamma',d_w,\omega) = \alpha\]
to $ \mathcal{M}'$.

Consider the matrix of Figure \ref{figure:matrix-E} as representing the transitions of $ \mathcal{M}$ on an input $x$. For the new machine $ \mathcal{M}'$, we will have a bigger matrix, since the new list of reachable configurations is $ | \Delta |$ times longer than that of $ \mathcal{M}$. Ordering the configurations so that those with the same $\tau$ value in their state components are grouped together under the name $\mathcal{C}_x(Q_{\tau})$, we obtain the 
template of Figure \ref{figure:matrix-E-prime}, in which only the transitions from states $ (q,c) $ for $q \in Q$ are filled in, according to the specification presented above. 

\begin{center}
\begin{figure}[h!]
	\centering	
	\small
	\begin{minipage}{\textwidth}
		\[
			\begin{array}{rl}
				& \hspace*{40pt}
					\begin{array}{cccc}
						\mathcal{C}_x(Q_{c}) & ~~~ \mathcal{C}_x(Q_{\tau_2}) & \mspace{5mu}  \cdots & ~ \mathcal{C}_x(Q_{\tau_k})
					\end{array}
				\\
				\tau_1 = c \left \lbrace \begin{array}{c} \\ \\ \\ \\ \\ \\ \\ \\ \\ \\ \\ \\ \\ \end{array} \right.
					& \mspace{-10mu}
					\begin{array}{c|c|c|c|c|}
						\hline
						\hspace*{7pt}  \mathcal{C}_x(Q_{c}) \hspace*{6.5pt}  & 
							\hspace*{10pt}  E_{\omega_1} \hspace*{9.5pt}  & 
							\hspace*{33pt} & \hspace*{15pt} & \hspace*{33pt}
						\\ \hline
						\mathcal{C}_x(Q_{\tau_2}) & 0 &  & & 
						\\ \hline
						\vdots & \vdots & & & 
						\\ \hline 
						\mathcal{C}_x(Q_{\tau_k}) & 0  &  & &
						\\ \hline \hline
						\mathcal{C}_x(Q_{c}) & E_{\omega_2} &  & & 
						\\ \hline
						\mathcal{C}_x(Q_{\tau_2}) & 0 &  & & 
						\\ \hline
						\vdots & \vdots & & & 
						\\ \hline 
						\mathcal{C}_x(Q_{\tau_k}) & 0  &  & &
						\\ \hline \hline
						\vdots & \vdots & & & 
						\\
						\vdots & \vdots & & & 
						\\ \hline
					\end{array}
				\\ 
				\tau_2 \left \lbrace \begin{array}{c} \\ \\ \\ \\ \\ \\ \\ \\ \\ \end{array} \right.
					& \mspace{-10mu}
					\begin{array}{c|c|c|c|c|}
						\hline
						 \hspace*{7pt} \mathcal{C}_x(Q_{c}) \hspace*{6.5pt} & 0 & 
						 	\hspace*{33pt} & \hspace*{15pt} & \hspace*{33pt}
						\\ \hline
						\mathcal{C}_x(Q_{\tau_2}) &  E_{\omega_{|\Omega_c|+1}} \hspace*{1.5pt}  &  & & 
						\\ \hline
						\mathcal{C}_x(Q_{\tau_3}) &  0  &  & & 
						\\ \hline
						\vdots & \vdots & & & 
						\\ \hline 
						\mathcal{C}_x(Q_{\tau_k}) & 0  &  & &
						\\ \hline \hline
						\vdots & \vdots & & & 
						\\ \vdots & \vdots & & & 
						\\ \hline 
					\end{array}
				\\ 				
					& \mspace{-14mu}
					\begin{array}{c|c|c|c|c|}
						\hline
						\multicolumn{1}{c|}{ \hspace*{21pt} \vdots \hspace*{20pt}} & 
							\multicolumn{1}{c|}{ \hspace*{17pt} \vdots \hspace*{17.5pt} } 
							& \hspace*{33pt} & \hspace*{14.5pt} & \hspace*{33pt}
							 \\	
						\multicolumn{1}{c|}{\vdots} & \multicolumn{1}{c|}{\vdots} & & & \\	
						\multicolumn{1}{c|}{\vdots} & \multicolumn{1}{c|}{\vdots} & & & \\
						\hline
					\end{array}	
				\\ 
				\tau_k \left \lbrace \begin{array}{c} \\ \\ \\ \\ \\ \\ \\ \\ \end{array} \right.
					& \mspace{-10mu}
					\begin{array}{c|c|c|c|c|}
						\hline
						\vdots & \vdots & \hspace*{33pt} & \hspace*{15pt} & \hspace*{33pt}	
						\\
						\vdots & \vdots & & & 
						\\ \hline
						\hline
						\mathcal{C}_x(Q_{c}) & 0 &  & & 
						\\ \hline
						\vdots & \vdots & & & 
						\\ \hline 
						\mathcal{C}_x(Q_{\tau_{k-1}}) &  0  &  & & 
						\\ \hline			
						\mathcal{C}_x(Q_{\tau_k}) & \hspace*{7pt} E_{\omega_{|\Omega|}} \hspace*{7pt}  &  & &
						\\ \hline
					\end{array}
			\end{array}
		\]
	\end{minipage}
	\caption{Partially filled configuration transition matrix of $ \mathcal{M}' $ on input $x$}
	\label{figure:matrix-E-prime}
\end{figure}
\end{center}

The remainder of $\delta'$ is set as follows to ensure well-formedness:

For every $\tau_j,\tau_l \in \Delta$, if $ \mathcal{M} $ contains the transition 
\[\delta(q,\sigma,\gamma,q',d_i,\gamma',d_w,\omega) = \alpha, \]
where $ \omega \in \Omega_{\tau_l}$, we add the transition 
\[\delta'((q,j),\sigma,\gamma,(q',\tau_{j+l-1 ~(\mbox{mod } |\Delta|)}),d_i,\gamma',d_w,\omega) = \alpha\]
to $ \mathcal{M}'$.

This procedure yields a transition function that satisfies Equation \ref{equation:wf}, since it distributes exact copies of the $|\Omega|$ nonzero submatrices seen in the first ``column"\footnote{Note that each empty box in the matrix of Figure \ref{figure:matrix-E-prime} corresponds to a $ | \mathcal{C}_x | \times | \mathcal{C}_x |$ matrix.} of the matrix of Figure \ref{figure:matrix-E-prime} to $|\Omega|$ ``boxes" in each of the other columns, while ensuring that each ``row" contains exactly one such nonzero submatrix. The orthonormality of the actual set of column vectors follows from the well-formedness of $ \mathcal{M} $.

\section{Definitions of postselection finite automata} \label{app:postfa}

A real-time PFA with postselection (PFAP) is a 5-tuple 
$ \mathcal{P}=(Q,\Sigma,\{ A_{\sigma} \mid \sigma \in \tilde{\Sigma} \},q_{1},\Delta), $
where $ Q $,    
$ q_{1} $, 
$ \Sigma $ have the same semantics as in our previous definitions,
the $ A_{\sigma} $'s are  transition matrices, whose columns are
stochastic vectors, such that $ A_{\sigma} $'s $ (j,i)^{th}
$ entry, denoted  $ A_{\sigma}[j,i] $, is the probability of the transition from state $
q_{i} $ to state $ q_{j} $ when reading symbol $ \sigma $,
and $ \Delta = \{pa,pr,nh\} $.

The computation of a PFAP can be traced by a stochastic state vector,
say $ v $, whose $ i^{th} $ entry, denoted
$ v[i] $, corresponds to state $ q_{i} $.
For a given input string $ w \in \Sigma^{*} $ (the string read by the machine is $ \tilde{w} = \cent w \dollar $),
\begin{equation*}
      v_{i} = A_{\tilde{w}_{i}} v_{i-1},
\end{equation*}
where $ \tilde{w}_{i} $ denotes the $ i^{th} $ symbol of $ \tilde{w} $,  $ 1 \le i \le | \tilde{w} | $, and
$ v_{0} $ is the initial state vector, whose first entry is 1. ($ |\tilde{w}| $  denotes the length of $ \tilde{w} $.)
A PFAP must satisfy the  inequality
\begin{equation*}
       \sum_{q_{i} \in \{ Q_{pa} \cup Q_{pr} \} } v_{|\tilde{w}|}[i] > 0
\end{equation*}
for all inputs $w$.

The \textit{acceptance and rejection probabilities of input string $w$ by PFAP $ \mathcal{P} $  
before postselection} are defined as
\begin{equation*}
       p_{\mathcal{P}}^{a}(w) = \sum_{q_{i} \in Q_{pa}} v_{|\tilde{w}|}[i]
       ~~~~ \mbox{ and } ~~~~
       p_{\mathcal{P}}^{r}(w) = \sum_{q_{i} \in Q_{pr}} v_{|\tilde{w}|}[i].
\end{equation*}

A realtime QFA with postselection (QFAP) is a 5-tuple
$ \mathcal{M}=(Q,\Sigma,\{\mathcal{E}_{\sigma} \mid \sigma \in \tilde{\Sigma}\},q_{1},\Delta), $
where $ Q$, $\Sigma$, $q_{1}$, and $ \Delta $ are as defined above for PFAPs, 
and $ \mathcal{E}_{\sigma } $ is a collection of state transition matrices\footnote{Note the difference from Appendix \ref{app:wellform}, where these were \textit{configuration} transition matrices.}
$ \{ E_{\sigma,1},\ldots,E_{\sigma,k} \} $ for some $ k \in \mathbb{Z}^{+} $
satisfying
\begin{equation*}
      \sum_{i=1}^{k} E_{\sigma,i}^{\dagger} E_{\sigma,i} = I.
\end{equation*}
Additionally, we define the  operator
\begin{equation*}
     P = \{ P_{ \tau \in \Delta } \mid P_{\tau} = \sum_{q \in Q_{\tau}} \ket{q}\bra{q} \}
\end{equation*}
representing the single measurement of the state type at the end of the computation.\footnote{Multiple measurements of the register are not required in real-time computation \cite{YS11A}. That is why we do not need any ``continuing" states in the partition induced by $\Delta$; the computation stops only at the end of the input.}

For a given input string $ w \in \Sigma^{*} $ (the string read by the machine is $ \tilde{w} = \cent w \dollar $), the overall state of the machine can be traced by
\begin{equation*}
      \rho_{j} = \mathcal{E}_{\tilde{w}_{j}} (\rho_{j-1}) =
      \sum_{i=1}^{k} E_{\tilde{w}_{j},i} \rho_{j-1}
E_{\tilde{w}_{j},i}^{\dagger},
\end{equation*}
where $ 1 \le j \le | \tilde{w} |  $ and $ \rho_{0} = \ket{q_{1}}
\bra{q_{1}} $ is the initial density matrix \cite{NC00}.
A QFAP must satisfy the  inequality
\begin{equation*}
      tr(P_{pa} \rho_{| \tilde{w} |} ) + tr(P_{pr} \rho_{| \tilde{w} |} ) > 0
\end{equation*}
for all inputs $w$.

The \textit{acceptance and rejection probabilities of input string $w$ by QFAP 
$ \mathcal{M} $  before postselection} are defined as
\begin{equation*}
       p_{\mathcal{M}}^{a}(w) = tr(P_{pa} \rho_{| \tilde{w} |} )
        ~~~~ \mbox{ and } ~~~~
         p_{\mathcal{M}}^{r}(w) = tr(P_{pr} \rho_{| \tilde{w} |} ).
\end{equation*}

\section{Error reduction for postselection finite automata} \label{app:postfa-error}

Let $ \mathcal{M} $ be a machine with postselection (or, equivalently, a machine with restart).
We have the following relation \cite{YS10B}.
\begin{lemma}
	\label{lem:bounded-error}
    The language $ L \subseteq \Sigma^{*} $ is recognized by $ \mathcal{M} $ with error bound $ \epsilon $
	if and only if $ \frac{ p_{\mathcal{M}}^{r}(w) }{ p_{\mathcal{M}}^{a}(w) } 
	\le \frac{\epsilon}{1-\epsilon} $ when $ w \in L $, 
	and $ \frac{ p_{\mathcal{M}}^{a}(w) }{ p_{\mathcal{M}}^{r}(w) } \le \frac{\epsilon}{1-\epsilon} $
	when $ w \notin L $.
\end{lemma}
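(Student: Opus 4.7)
The plan is to prove this lemma by direct algebraic manipulation starting from the overall acceptance/rejection probability formulas already established for machines with postselection (Equations \ref{eq:postacc} and \ref{eq:postrej}) or, equivalently, machines with restart (Lemma \ref{lem:overall-acc-rej}). The two cases are symmetric, so I would handle $w \in L$ in detail and then observe that the $w \notin L$ case follows by swapping the roles of $p_{\mathcal{M}}^a$ and $p_{\mathcal{M}}^r$.

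First I would unfold the definition of recognition with error bound $\epsilon$: for $w \in L$, this requires $\mathrm{P}(\mathcal{M}\text{ accepts }w) \geq 1 - \epsilon$, which by Equation \ref{eq:postacc} is equivalent to
\[
\frac{p_{\mathcal{M}}^{a}(w)}{p_{\mathcal{M}}^{a}(w) + p_{\mathcal{M}}^{r}(w)} \geq 1 - \epsilon.
\]
Since $\mathcal{M}$ is a valid machine with postselection, the denominator is strictly positive, so I can multiply through and rearrange to obtain $\epsilon \, p_{\mathcal{M}}^{a}(w) \geq (1-\epsilon)\, p_{\mathcal{M}}^{r}(w)$. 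Assuming $p_{\mathcal{M}}^{a}(w) > 0$ (which must hold whenever $1 - \epsilon > 0$, since otherwise the acceptance probability would be 0), dividing yields the target inequality
\[
\frac{p_{\mathcal{M}}^{r}(w)}{p_{\mathcal{M}}^{a}(w)} \leq \frac{\epsilon}{1-\epsilon}.
\]
Each implication is reversible, giving the ``if and only if''. The case $w \notin L$ is handled identically, starting from $\mathrm{P}(\mathcal{M}\text{ rejects }w) \geq 1 - \epsilon$ and using Equation \ref{eq:postrej} to derive $\frac{p_{\mathcal{M}}^{a}(w)}{p_{\mathcal{M}}^{r}(w)} \leq \frac{\epsilon}{1-\epsilon}$.

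There is no real obstacle here: the lemma is essentially a restatement of the normalization formula in a form that isolates the ``pre-postselection'' accept/reject ratio, which is the convenient quantity when reasoning about parallel repetition and error amplification (cf.\ Appendix \ref{app:postfa-closure} / the discussion following Theorem \ref{thm:post-closure}). The only minor care needed is to observe that well-formedness of the postselection model guarantees $p_{\mathcal{M}}^{a}(w) + p_{\mathcal{M}}^{r}(w) > 0$, and that in the nontrivial regime $\epsilon < 1/2$, the denominator in the ratio on the left-hand side is positive for strings meeting the error bound, so division is legitimate in both directions.
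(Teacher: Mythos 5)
Your proposal is correct and follows essentially the same route as the paper: both arguments simply rewrite $\mathrm{P}(\mathcal{M}\text{ accepts }w)=p_{\mathcal{M}}^{a}(w)/\bigl(p_{\mathcal{M}}^{a}(w)+p_{\mathcal{M}}^{r}(w)\bigr)$ from Lemma~\ref{lem:overall-acc-rej} in terms of the ratio $p_{\mathcal{M}}^{r}(w)/p_{\mathcal{M}}^{a}(w)$ and observe the equivalence with $\ge 1-\epsilon$, handling $w\notin L$ symmetrically. Your extra care about the positivity of $p_{\mathcal{M}}^{a}(w)+p_{\mathcal{M}}^{r}(w)$ and of $p_{\mathcal{M}}^{a}(w)$ is a harmless (and mildly welcome) elaboration of what the paper leaves implicit.
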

\begin{proof}
 This follows from Lemma \ref{lem:overall-acc-rej}, since, for all $ \epsilon \in [0,\frac{1}{2}) $,
 \begin{equation*}
 \label{equation:1overp-epsilon}
       \mbox{P(}\mathcal{M}\mbox{ accepts }w)	=\frac{ p_{\mathcal{M}}^{a} (w) }{ p_{\mathcal{M}}^{a} (w) + p_{\mathcal{M}}^{r} (w) } =\frac{1}{1+\frac{p_{\mathcal{M}}^{r} (w)}{p_{\mathcal{M}}^{a} (w)}} \ge 1-\epsilon \Leftrightarrow \frac{p_{\mathcal{M}}^{r} (w)}{p_{\mathcal{M}}^{a} (w)} \le
\frac{\epsilon}{1-\epsilon}.
 \end{equation*}
 The argument for $ w \notin L $ is identical.
\end{proof}

\begin{lemma}
	\label{lem:rt-amplification}
	If $ L $ is recognized by QFAP (resp., PFAP) $ \mathcal{M} $ with error bound 
	$ \epsilon \in (0,\frac{1}{2})  $,
	then there exists a QFAP (resp., PFAP), say $ \mathcal{M}' $,
	recognizing $ L $ with error bound $  \epsilon^{2} $.
\end{lemma}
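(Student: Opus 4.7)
The plan is to build $\mathcal{M}'$ as a $k$-fold parallel composition of $\mathcal{M}$, postselected on \emph{unanimous} answers, with $k$ chosen as a function of $\epsilon$. First I would invoke Lemma \ref{lem:bounded-error} in the ``hard'' direction: the hypothesis on $\mathcal{M}$ translates into the ratio bounds $p^{r}_{\mathcal{M}}(w)/p^{a}_{\mathcal{M}}(w) \le r$ for $w \in L$ and $p^{a}_{\mathcal{M}}(w)/p^{r}_{\mathcal{M}}(w) \le r$ for $w \notin L$, where $r := \epsilon/(1-\epsilon) < 1$. What I actually have to deliver (again by Lemma \ref{lem:bounded-error}, now in the ``easy'' direction) is a machine whose analogous ratio is at most $\rho := \epsilon^{2}/(1-\epsilon^{2})$.

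For the construction itself I would form the Cartesian product in the classical case (state set $Q^{k}$, transition matrices $A_{\sigma}^{\otimes k}$) and the tensor product in the quantum case (Hilbert space $\mathcal{H}^{\otimes k}$, Kraus sets $\{E_{\sigma,i_{1}} \otimes \cdots \otimes E_{\sigma,i_{k}}\}$, projectors $P_{pa}^{\otimes k}$ and $P_{pr}^{\otimes k}$). The partition is then: a joint state is postselection-accept iff \emph{every} coordinate lies in $Q_{pa}$, postselection-reject iff every coordinate lies in $Q_{pr}$, and nonpostselection halting in all other cases. By independence of the $k$ copies,
\[
p^{a}_{\mathcal{M}'}(w) = \bigl[p^{a}_{\mathcal{M}}(w)\bigr]^{k}, \qquad p^{r}_{\mathcal{M}'}(w) = \bigl[p^{r}_{\mathcal{M}}(w)\bigr]^{k},
\]
so the relevant ratio becomes $r^{k}$ (resp.\ $r^{k}$ in the $w \notin L$ case). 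Since $r<1$, I simply pick $k$ large enough that $r^{k} \le \rho$, and Lemma \ref{lem:bounded-error} immediately yields error bound $\epsilon^{2}$ for $\mathcal{M}'$.

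The routine checks are: (i) well-formedness of $\mathcal{M}'$, which is inherited from the fact that tensor/Cartesian products of stochastic matrices are stochastic and that $\sum_{i} E_{\sigma,i}^{\dagger} E_{\sigma,i}=I$ tensorises to $\sum E^{\otimes k \, \dagger} E^{\otimes k} = I^{\otimes k}$; (ii) the ``nonzero postselection'' requirement $p^{a}_{\mathcal{M}'}(w)+p^{r}_{\mathcal{M}'}(w)>0$, which follows because the hypothesis $p^{a}_{\mathcal{M}}(w)+p^{r}_{\mathcal{M}}(w)>0$ together with the error bound forces both $p^{a}_{\mathcal{M}}(w)$ and $p^{r}_{\mathcal{M}}(w)$ to be strictly positive on every input (when $\epsilon>0$), so their $k$th powers are too.

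The only mildly delicate point, and the one I would be most careful about, is the quantum measurement bookkeeping: I need the single final projective measurement of $\mathcal{M}'$ to decompose cleanly as a tensor of the $k$ single-copy measurements, so that $\mathrm{tr}(P_{pa}^{\otimes k}\rho^{\otimes k}_{|\tilde w|}) = [\mathrm{tr}(P_{pa}\rho_{|\tilde w|})]^{k}$ on the unentangled product state that results from running the copies on a shared classical input. This is immediate given the real-time QFAP formalism of Appendix \ref{app:postfa}, where measurement occurs only once at the end, but it is worth writing out explicitly so that the identity $p^{a}_{\mathcal{M}'} = (p^{a}_{\mathcal{M}})^{k}$ does not quietly depend on any assumption about intermediate observations.
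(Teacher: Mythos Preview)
Your approach is essentially identical to the paper's: tensor $k$ copies, declare the unanimous-accept and unanimous-reject tuples as the new postselection sets, observe that the pre-postselection acceptance and rejection probabilities become $k$th powers, and invoke Lemma~\ref{lem:bounded-error} to choose $k$ so that $r^{k}\le \epsilon^{2}/(1-\epsilon^{2})$. One small slip in your check~(ii): the error bound does \emph{not} force both $p^{a}_{\mathcal{M}}(w)$ and $p^{r}_{\mathcal{M}}(w)$ to be strictly positive (a member could be accepted with certainty, giving $p^{r}_{\mathcal{M}}(w)=0$), but the conclusion survives since at least one of the two is always positive, hence so is its $k$th power.
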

\begin{proof}
	We give a proof for QFAPs, which can be adapted easily to PFAPs.
	$ M' $ can be obtained by taking the tensor product of $ k $ copies of $ \mathcal{M} $, where
	the new postselection accept (resp., reject)  states, $ Q_{pa}' $  (resp., $ Q_{pr}' $),
	are $ \otimes_{i=1}^{k} Q_{pa} $ (resp., $ \otimes_{i=1}^{k} Q_{pr} $),
	where $ Q_{pa} $ (resp., $ Q_{pr} $) are the postselection accept (resp., reject) states of $ \mathcal{M} $.
	
	Let $ \rho_{\tilde{w}} $ and $ \rho_{\tilde{w}}' $ be the  density matrices of 
	$ \mathcal{M} $ and $ \mathcal{M}' $, respectively,
	after reading $ \tilde{w} $ for a given input string $ w \in \Sigma^{*} $. 
	By definition, we have
	\begin{equation*}
		p_{\mathcal{M}}^{a}(w) = \sum_{q_{i} \in Q_{pa} }\rho_{\tilde{w}}[i,i],
		~~~~
		p_{\mathcal{M}'}^{a}(w) = \sum_{q_{i'} \in Q_{pa}' }\rho_{\tilde{w}}[i',i']		
	\end{equation*}
	and 
	 \begin{equation*}
		p_{\mathcal{M}}^{r}(w) = \sum_{q_{i} \in Q_{pr} }\rho_{\tilde{w}}[i,i],
		~~~~
		p_{\mathcal{M}'}^{r}(w) = \sum_{q_{i'} \in Q_{pr}' }\rho_{\tilde{w}}[i',i'].
	\end{equation*}
	By using the equality $ \rho_{\tilde{w}}' = \otimes_{i=1}^{k} \rho_{\tilde{w}}  $,
	the following can be obtained with a straightforward calculation:
	\begin{equation*}
		p_{\mathcal{M}'}^{a}(w) = \left( p_{\mathcal{M}}^{a}(w) \right)^{k}
	\end{equation*}
	and
	\begin{equation*}
		p_{\mathcal{M}'}^{r}(w) = \left( p_{\mathcal{M}}^{r}(w) \right)^{k}.
	\end{equation*}
	
	We examine the case of $ w \in L $ (the case $ w \notin L $ is symmetric).
	Since $ L $ is recognized by $ \mathcal{M} $ with error bound $ \epsilon $, 
	we have (due to Lemma \ref{lem:bounded-error})
	\begin{equation*}
		 \frac{p_{\mathcal{M}}^{r}(w)}{p_{\mathcal{M}}^{a}(w)}
		 \leq 
		 \frac{\epsilon}{1-\epsilon}.
	\end{equation*}
	If $ L $ is recognized by $ \mathcal{M}' $ with error bound $ \epsilon^{2} $,
	we must have
	\begin{equation*}
		\frac{p_{\mathcal{M}'}^{r}(w)}{p_{\mathcal{M}'}^{a}(w)}
		\leq
		\frac{\epsilon^{2}}{1-\epsilon^{2}}.
	\end{equation*}
	Thus, any $ k $ satisfying
	\begin{equation*}		
		\left( \frac{\epsilon}{1-\epsilon} \right)^{k}
		\leq		
		\frac{\epsilon^{2}}{1-\epsilon^{2}}
	\end{equation*}
provides the desired machine $ \mathcal{M}' $	due to the fact that 
	\begin{equation*}
		\frac{p_{\mathcal{M}'}^{r}(w)}{p_{\mathcal{M}'}^{a}(w)} =
		\left( \frac{p_{\mathcal{M}}^{r}(w)}{p_{\mathcal{M}}^{a}(w)} \right)^{k}.
	\end{equation*}
	By solving this equation, we  get the best value
	\begin{equation*}
		k = 1 + \left\lceil \frac{ \log \left( \frac{1}{\epsilon} + 1 \right) }{ 
		\log \left( \frac{1}{\epsilon} - 1 \right) } \right\rceil.
	\end{equation*}
	Therefore, for any $ 0 < \epsilon < \frac{1}{2} $, we can find a value for $ k $.
\end{proof}

\begin{theorem}
	\label{thm:rt-amplification}
	If $ L $ is recognized by QFAP (resp., PFAP) $ \mathcal{M} $ with error bound 
	$ 0 < \epsilon < \frac{1}{2}  $,
	then there exists a QFAP (resp., PFAP), say $ \mathcal{M}' $, recognizing $ L $ with error bound 
	$ \epsilon' < \epsilon $ such that $ \epsilon' $ can be arbitrarily close to 0.
\end{theorem}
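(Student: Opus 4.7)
The plan is to iterate the construction of Lemma \ref{lem:rt-amplification} (or equivalently, to apply a single tensor-product construction with more factors). Given any target error bound $ \epsilon' \in (0,\epsilon) $, I want to build a QFAP (resp.\ PFAP) $ \mathcal{M}' $ as the $ k $-fold tensor power of $ \mathcal{M} $, with $ Q_{pa}' = \otimes_{i=1}^{k} Q_{pa} $ and $ Q_{pr}' = \otimes_{i=1}^{k} Q_{pr} $, exactly as in the proof of Lemma \ref{lem:rt-amplification}. The same calculation there shows
\[
       \frac{p_{\mathcal{M}'}^{r}(w)}{p_{\mathcal{M}'}^{a}(w)}
       = \left( \frac{p_{\mathcal{M}}^{r}(w)}{p_{\mathcal{M}}^{a}(w)} \right)^{k}
\]
for $ w \in L $, and symmetrically for $ w \notin L $.

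By Lemma \ref{lem:bounded-error}, the hypothesis gives $ \frac{p_{\mathcal{M}}^{r}(w)}{p_{\mathcal{M}}^{a}(w)} \le \frac{\epsilon}{1-\epsilon} $ on members of $ L $ (and the analogous inequality on non-members). Since $ \epsilon < \frac{1}{2} $, the ratio $ \frac{\epsilon}{1-\epsilon} $ is strictly less than $ 1 $, so its $ k $-th power tends to $ 0 $ as $ k \to \infty $. Hence for any desired $ \epsilon' \in (0,\epsilon) $ we can solve
\[
       \left( \frac{\epsilon}{1-\epsilon} \right)^{k} \le \frac{\epsilon'}{1-\epsilon'}
\]
for $ k $; concretely, $ k = \left\lceil \log\!\left( \frac{\epsilon'}{1-\epsilon'} \right) \big/ \log\!\left( \frac{\epsilon}{1-\epsilon} \right) \right\rceil $ suffices. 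Applying Lemma \ref{lem:bounded-error} in the reverse direction to the resulting inequalities for $ \mathcal{M}' $ then yields that $ \mathcal{M}' $ recognizes $ L $ with error bound at most $ \epsilon' $.

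There is no real obstacle here: the construction is essentially a quantitative strengthening of Lemma \ref{lem:rt-amplification}. One small point to mention is that $ \mathcal{M}' $ is still a real-time QFAP (resp.\ PFAP), since tensor products of real-time finite-memory machines are again real-time finite-memory machines, and the postselection set is nonempty on every input because the original $ \mathcal{M} $ already satisfied this condition on each tensor factor. Iterating Lemma \ref{lem:rt-amplification} $ \lceil \log_2 \log(1/\epsilon') \rceil $ times would give an alternative (slightly wasteful) proof with the same conclusion.
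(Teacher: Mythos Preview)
Your proposal is correct and matches the paper's intent: the paper states Theorem~\ref{thm:rt-amplification} immediately after Lemma~\ref{lem:rt-amplification} with no separate proof, leaving it as an immediate consequence obtained by iterating (or directly extending) the tensor-power construction of that lemma. Your direct $k$-fold tensor power with an explicit choice of $k$, together with the iteration remark at the end, is exactly this argument spelled out in full.
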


\section{Closure properties of $ \mathsf{PostBS} $  and $ \mathsf{PostBQAL} $} \label{app:postfa-closure}

\begin{theorem}
	\label{thm:rt-post-closure}
	$ \mathsf{PostBS} $  and $ \mathsf{PostBQAL} $ are closed under complementation, union, and intersection.
\end{theorem}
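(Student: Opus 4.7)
The plan is to obtain complementation by a trivial relabeling, intersection via a product construction on amplified machines, and union by De Morgan's law applied to the two.

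For complementation, given a PFAP (or QFAP) recognizing $L$ with error bound $\epsilon$, one simply swaps the designations of $Q_{pa}$ and $Q_{pr}$ in the machine. The quantities $p^{a}_{\mathcal{M}}(w)$ and $p^{r}_{\mathcal{M}}(w)$ of Appendix \ref{app:postfa} interchange on every input, so by Equations \ref{eq:postacc}--\ref{eq:postrej} the new acceptance probability equals the old rejection probability on each $w$, giving $\overline{L}$ with the same error bound.

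For intersection of $L_1, L_2$ recognized by $\mathcal{M}_1, \mathcal{M}_2$, first invoke Theorem \ref{thm:rt-amplification} to drive both error bounds below a common value $\epsilon$ small enough that $(1-\epsilon)^{2} > \frac{1}{2}$. Next form the parallel-composition machine $\mathcal{M}$: in the probabilistic case take the direct product, with state set $Q_1 \times Q_2$, initial state $(q^{(1)}_1, q^{(2)}_1)$, and transition matrices $A^{(1)}_{\sigma} \otimes A^{(2)}_{\sigma}$; in the quantum case use the tensor-product evolution described by the collection $\{ E^{(1)}_{\sigma, i} \otimes E^{(2)}_{\sigma, j} \}$, which automatically satisfies the completeness relation. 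Classify joint states as postselection-accept if both coordinates lie in $Q^{(i)}_{pa}$, as nonpostselection-halting if either coordinate lies in $Q^{(i)}_{nh}$, and as postselection-reject in all remaining cases. Independence of the two subsystems gives $p^{a}_{\mathcal{M}}(w) = p^{a}_{\mathcal{M}_1}(w)\, p^{a}_{\mathcal{M}_2}(w)$ and $p^{a}_{\mathcal{M}}(w) + p^{r}_{\mathcal{M}}(w) = \bigl(p^{a}_{\mathcal{M}_1}(w) + p^{r}_{\mathcal{M}_1}(w)\bigr)\bigl(p^{a}_{\mathcal{M}_2}(w) + p^{r}_{\mathcal{M}_2}(w)\bigr)$, which is positive because each factor is. So $\mathcal{M}$ is a legitimate machine with postselection, and its normalized acceptance probability on $w$ is precisely the product of the two individual normalized acceptance probabilities: at least $(1-\epsilon)^{2} > \frac{1}{2}$ when $w \in L_1 \cap L_2$, and at most $\epsilon < \frac{1}{2}$ when $w$ is missing from either $L_i$. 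Union is then obtained from $L_1 \cup L_2 = \overline{\overline{L_1} \cap \overline{L_2}}$ together with the two closure properties just established.

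The only delicate point is to choose the amplified $\epsilon$ small enough that the member lower bound $(1-\epsilon)^{2}$ and the nonmember upper bound $\epsilon$ sit on opposite sides of $\frac{1}{2}$; since Theorem \ref{thm:rt-amplification} lets one push $\epsilon$ arbitrarily close to zero, this poses no difficulty. Checking that the product construction remains a well-formed PFAP or QFAP, and that the classification of joint states does not inadvertently make the total postselection mass vanish on some input, is routine in both settings.
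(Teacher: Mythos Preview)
Your proof is correct and matches the paper's approach almost exactly: complementation by swapping $Q_{pa}$ and $Q_{pr}$, and intersection by the tensor product with exactly the same classification of joint states (accept iff both factors accept, reject iff both postselect but not both accept, nonpostselection otherwise). The only difference is that the paper builds a separate direct construction for union (dual state classification on the same tensor product), whereas you obtain union via De~Morgan from complementation and intersection; this is a harmless economy and yields the same result.
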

\begin{proof}
	For any language recognized by a postselection finite automaton with bounded error, we can obtain a new machine
	recognizing the complement of that language with bounded error, 
	by just swapping the designations of the postselection accept and reject states.
	Therefore, both classes are closed under complementation.
	
	Let $ L_{1} $ and $ L_{2} $ be members of $ \mathsf{PostBQAL} $ (resp., $ \mathsf{PostBS} $).
	Then, there exist  two QFAPs (resp., PFAPs) $ \mathcal{P}_{1} $ and $ \mathcal{P}_{2} $	
	recognizing $ L_{1} $ and $ L_{2} $ with error bound $ \epsilon \leq \frac{1}{4} $, respectively.
	Moreover, let $ Q_{pa_{1}} $ and $ Q_{pr_{1}} $ (resp., $ Q_{pa_{2}} $ and $ Q_{pr_{2}} $)
	represent the sets of postselection accept and reject states of $ \mathcal{P}_{1} $
	(resp., $ \mathcal{P}_{2} $), respectively, and let $ Q_{p_{1}} =  Q_{pa_{1}} \cup Q_{pr_{1}} $
	and $ Q_{p_{2}} =  Q_{pa_{2}} \cup Q_{pr_{2}} $.
	By taking the tensor products of  $ \mathcal{P}_{1} $ and $ \mathcal{P}_{2} $, we obtain two new machines, 
	say $ \mathcal{M}_{1} $ and $ \mathcal{M}_{2} $, and set their definitions so that
	\begin{itemize}
		\item the sets of the postselection accept and reject  states of $ \mathcal{M}_{1} $ are
			$ Q_{p_{1}} \otimes Q_{p_{2}} \setminus Q_{pr_{1}} \otimes Q_{pr_{2}} $
			and
			$ Q_{pr_{1}} \otimes Q_{pr_{2}}, $ 
			respectively, and
		\item the sets of the postselection accept and reject  states of $ \mathcal{M}_{2} $ are
			$ Q_{pa_{1}} \otimes Q_{pa_{2}} $ and 
			$ Q_{p_{1}} \otimes Q_{p_{2}} \setminus Q_{pa_{1}} \otimes Q_{pa_{2}}, $
			respectively.
	\end{itemize}
	Thus, the following inequalities can be verified for a given input string $ w \in \Sigma^{*} $:
	\begin{itemize}
		\item if $ w \in L_{1} \cup L_{2} $, $ \mbox{P(}\mathcal{M}_{1}\mbox{ accepts }w) \ge \frac{15}{16} $;
		\item if $ w \notin L_{1} \cup L_{2} $, $ \mbox{P(}\mathcal{M}_{1}\mbox{ accepts }w) \leq \frac{7}{16} $;
		\item if $ w \in L_{1} \cap L_{2} $, $ \mbox{P(}\mathcal{M}_{2}\mbox{ accepts }w) \ge \frac{9}{16} $; 
		\item if $ w \notin L_{1} \cap L_{2} $, $ \mbox{P(}\mathcal{M}_{2}\mbox{ accepts }w) \leq \frac{1}{16} $.
	\end{itemize}
	We conclude that both classes are closed under union and intersection.
\end{proof}

\section{More results on Riga postselection finite automata} \label{app:lpostfa}

\begin{theorem}
	\label{thm:lpostfa-equal}
	$ \mathsf{\frak{R}PostBS} $ = $ \mathsf{PostBS} $, $ \mathsf{\frak{R}PostRS} $ = $ \mathsf{PostRS} $, and $ \mathsf{PostES} $=$ \mathsf{\frak{R}PostES} $.
\end{theorem}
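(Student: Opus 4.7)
The inclusions $\mathsf{PostBS}\subseteq\mathsf{\frak{R}PostBS}$, $\mathsf{PostRS}\subseteq\mathsf{\frak{R}PostRS}$, and $\mathsf{PostES}\subseteq\mathsf{\frak{R}PostES}$ are immediate, since every PFAP is already a Riga PFAP in which the value of $\chi$ is vacuous (the total postselection probability is required to be positive on every input). For the reverse inclusions, the plan is to exhibit a single construction that converts any Riga PFAP $\mathcal{M}$ into a standard PFAP $\mathcal{M}'$ with identical acceptance and rejection probabilities on every input; this will handle the bounded-error, one-sided-error, and error-free cases uniformly.

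The key observation is that in a classical PFA, the set of internal states reached with positive probability after reading a given input prefix coincides with the set of states reachable on that prefix in the underlying NFA, and is therefore a deterministic function of the prefix that can be tracked by the standard subset construction alongside a faithful simulation of $\mathcal{M}$. Concretely, $\mathcal{M}'$ has state set $Q\times 2^Q$ and initial state $(q_1,\{q_1\})$, and from a state $(q,S)$ on input symbol $\sigma$ it moves to $(q',S')$ with probability $A_\sigma[q',q]$, where $S'$ is the deterministic subset-successor of $S$ on $\sigma$. On reading $\dollar$, letting $S_{\dollar}$ denote the deterministic successor of $S$: if $S_{\dollar}\cap(Q_{pa}\cup Q_{pr})\neq\emptyset$ then $\mathcal{M}'$ mirrors $\mathcal{M}$'s $\dollar$-transition, designating $(q',S_{\dollar})$ a postselection-accept (resp.\ postselection-reject) state exactly when $q'\in Q_{pa}$ (resp.\ $q'\in Q_{pr}$); otherwise $\mathcal{M}'$ routes all probability mass, with probability $1$, to a single freshly added postselection-accept state when $\chi=A$, or to a single freshly added postselection-reject state when $\chi=R$. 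The total postselection probability of $\mathcal{M}'$ is therefore strictly positive on every input, so $\mathcal{M}'$ is a legitimate PFAP in the sense of Section \ref{sec:Posdefs}.

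When $\mathcal{M}$ has nonzero postselection probability on $w$, $\mathcal{M}'$ produces exactly the same conditional acceptance and rejection probabilities as $\mathcal{M}$; when $\mathcal{M}$'s postselection probability is zero, $\mathcal{M}'$ accepts (resp.\ rejects) with probability $1$, matching the Riga convention dictated by $\chi$. Because $\mathcal{M}$ and $\mathcal{M}'$ thus induce identical accept/reject probabilities on every input, bounded error, one-sided bounded error, and zero error are all preserved simultaneously, yielding the three claimed equalities. The only delicate point is checking that the subset-construction coordinate really is a deterministic function of the input prefix; this is a purely classical feature, because in the quantum setting amplitudes can cancel through interference and the analogous ``reached with nonzero weight'' bit is not a function of the input alone. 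This is also exactly why the same argument does not collapse the quantum classes, consistent with Theorem \ref{thm:nuncon}.
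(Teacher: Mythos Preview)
Your proposal is correct and is essentially the same argument as the paper's. The paper also observes that the set of inputs on which the postselection probability vanishes is regular (obtained by stripping the transition probabilities to get an NFA for its complement), builds a DFA $\mathcal{D}$ for that language, and tensors it with the original machine so that on zero-postselection inputs the decision is made deterministically according to $\chi$; your inline subset construction in the second coordinate of $Q\times 2^Q$ is precisely that determinization, just carried out explicitly rather than invoked as a black box.
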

\begin{proof}
	We give a proof of the first equality. The same technique can also be used for the remaining equalities.
	
	Since  $ \mathsf{PostBS} $ $ \subseteq $ $ \mathsf{\frak{R}PostBS} $ is trivial, 
	we show that $ \mathsf{\frak{R}PostBS} $ $ \subseteq $ $ \mathsf{PostBS} $.
	Let $ L $ be in $ \mathsf{\frak{R}PostBS} $, and let $ \mathcal{P} $ with state set $Q$, postselection states 
	$Q_{p}= Q_{pa} \cup  Q_{pr} $, and $ \chi \in \{A,R\} $ be the Riga PFA with postselection
	recognizing $ L $ with error bound $ \epsilon < \frac{1}{2} $.
	Suppose that $ L' $ is the set of strings that lead 
	$ \mathcal{P} $ to the postselection set with zero probability.
	By designating all postselection states as accepting states and removing the probability
	values of transitions, we obtain a real-time nondeterministic finite automaton 
	which recognizes the complement of $L'$.
	Thus, there exists a real-time deterministic finite automaton, say $ \mathcal{D} $, recognizing $ L' $. 
	Let $Q_{ \mathcal{D} }$ and $A_{\mathcal{D} }$ be the overall state set, and the set of accept states of 
	$ \mathcal{D} $, respectively.

	We combine $ \mathcal{P} $ and $ \mathcal{D} $ with a tensor product to obtain a PFAP $ \mathcal{P}' $. 
	The postselection state set of $ \mathcal{P}' $ is 
	$((Q \setminus Q_{p})\otimes A_{\mathcal{D}}) \cup (Q_{p}\otimes(Q_{ \mathcal{D}} \setminus A_{\mathcal{D}}))$. 
	The postselection accept states of $ \mathcal{P}' $ are:
	\begin{equation*}
		\left\lbrace
		\begin{array}{lll}
			((Q \setminus Q_{p})\otimes A_{\mathcal{D}}) \cup (Q_{pa}\otimes(Q_{ \mathcal{D}} \setminus A_{\mathcal{D}}))
				& , ~~ &\mbox{if } \chi = ``A", \\
			Q_{pa}\otimes(Q_{ \mathcal{D}} \setminus A_{\mathcal{D}}) & , &\mbox{if } \chi = ``R".
		\end{array}
		\right.
	\end{equation*}
	$ \mathcal{P}' $ is structured so that 
	if the input string $w$ is in $ L' $, 
	the decision is given deterministically with respect to $ \chi $, and
	if  $w \notin L' $, (that is, the probability of postselection by $ \mathcal{P} $ is nonzero,)
	the decision is given by the standard postselection procedure.
	Therefore, $ L $ is recognized by $ \mathcal{P}' $ with the same error bound as  $ \mathcal{P} $, 
	meaning that $ L \in \mathsf{PostBS}$.
\end{proof}

\begin{theorem}
	$ \mathsf{\frak{R}PostEQAL} $ $ \subsetneq $ $ \mathsf{\frak{R}PostRQAL} $.
\end{theorem}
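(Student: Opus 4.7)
The plan has two parts: the easy inclusion, and then pinpointing a separating language. The inclusion $\mathsf{\frak{R}PostEQAL} \subseteq \mathsf{\frak{R}PostRQAL}$ is immediate because error-free recognition is the special case $\epsilon = 0$ of one-sided bounded-error recognition. So all the real work lies in producing a language that belongs to the larger class but not the smaller one.

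As a candidate witness I plan to use
\[
L = \{0 w \mid w \in \overline{L_{pal}} \} \cup \{1 w \mid w \in L_{pal} \},
\]
where $0$ and $1$ are two marker symbols disjoint from the $\{a,b\}$ alphabet of $L_{pal}$, and I intend to argue that $L$ lies in $\mathsf{\frak{R}PostRQAL}$ but outside $\mathsf{NQAL}\cup\mathsf{coNQAL}$, which equals $\mathsf{\frak{R}PostEQAL}$ by Theorem~\ref{thm:nuncon}. For the non-containment I would exploit the fact that $\mathsf{NQAL}$ is closed under intersection with regular languages (direct-product construction) and under left quotient by any fixed single symbol (just let the simulating QFA begin in the superposition that the original would reach after reading that symbol), together with the known separation $L_{pal}\in\mathsf{coNQAL}\setminus\mathsf{NQAL}$ from \cite{YS10A}. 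If $L$ were in $\mathsf{NQAL}$, then $L\cap 1\Sigma^*$ would be as well, and left-quotienting by $1$ would give $L_{pal}\in\mathsf{NQAL}$, a contradiction; if $L$ were in $\mathsf{coNQAL}$, then the same manoeuvre applied to $\overline{L}\cap 0\Sigma^*=\{0w\mid w\in L_{pal}\}$ would again force $L_{pal}\in\mathsf{NQAL}$.

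The containment $L\in \mathsf{\frak{R}PostRQAL}$ would be established by building a Riga QFAP $\mathcal{M}$ with $\chi=A$ whose state space decomposes into two orthogonal subspaces entered according to whether the first input symbol is $0$ or $1$. In the $0$-subspace, $\mathcal{M}$ would simulate the one-sided bounded-error QFAP $\mathcal{M}_0$ for $\overline{L_{pal}}$ furnished by Theorem~\ref{thm:onesidedbeatsexact} together with Theorem~\ref{thm:posres}. In the $1$-subspace, $\mathcal{M}$ would simulate a standard cutpoint-zero QFA $\mathcal{M}_1$ for $\overline{L_{pal}}$, reinterpreting its accepting states as postselection-\emph{reject} states of $\mathcal{M}$ and placing no postselection-accept state in that subspace at all. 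Verifying correctness is then a short case analysis: on $0u$, $\mathcal{M}_0$'s own bounded-error guarantee for $\overline{L_{pal}}$ delivers the right outcome; on $1u$, we have $p^a=0$ identically, $p^r>0$ exactly when $u\in\overline{L_{pal}}$ (rejecting the correct inputs via postselection), and $p^a=p^r=0$ exactly when $u\in L_{pal}$ (accepting the correct inputs via the $\chi=A$ rule). Short inputs and inputs whose shape does not match $\{0,1\}\{a,b\}^*$ can be dispatched deterministically with a handful of dedicated states.

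The main obstacle, and really the only place where the argument could go wrong, is making sure that the $\chi=A$ convention does not accidentally trigger inside the $0$-subspace: if $\mathcal{M}_0$ were allowed to reach zero postselection probability on some $0u$ with $u\in L_{pal}$, that string would be wrongly accepted. This cannot occur because $\mathcal{M}_0$ is a (non-Riga) postselection machine, and so by the definition in Section~\ref{sec:Posdefs} is \emph{required} to halt with positive probability inside $Q_{pa}\cup Q_{pr}$ on every input; hence the total postselection probability of $\mathcal{M}$ on any string beginning with $0$ stays strictly positive and the auto-accept rule never fires in that branch.
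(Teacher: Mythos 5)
Your argument is sound and its engine is the same as the paper's: a Riga machine with $\chi=A$ that routes on the first symbol, running a genuine one-sided bounded-error QFAP on one branch and, on the other branch, a cutpoint-zero machine whose accepting states are reinterpreted as postselection-\emph{reject} states so that membership manifests as zero total postselection probability and is caught by the auto-accept rule. You also correctly identify the one real danger point (the auto-accept rule firing inside the ``honest'' branch) and dispose of it via the positivity requirement on standard postselection machines. Where you diverge is in the witness language and, more substantially, in how its non-membership in $\mathsf{\frak{R}PostEQAL}=\mathsf{NQAL}\cup\mathsf{coNQAL}$ is established. The paper uses $L_{eq\overline{eq}}=\{aw_1 \cup bw_2 \mid w_1\in L_{eq},\ w_2\in\overline{L_{eq}}\}$ and simply cites \cite{YS10A} for the fact that this language lies outside $\mathsf{NQAL}\cup\mathsf{coNQAL}$, so no closure lemmas are needed; you instead build an $L_{pal}$-based analogue and derive its non-membership from closure of $\mathsf{NQAL}$ under intersection with regular languages and under left quotient by a letter, combined with $L_{pal}\in\mathsf{coNQAL}\setminus\mathsf{NQAL}$. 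That derivation is self-contained in spirit and buys you independence from the specific result quoted for $L_{eq\overline{eq}}$, but it shifts the proof burden onto two closure properties that the paper nowhere establishes.

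Those closure properties do hold, but your justification of the left-quotient step is too quick as stated: after the machine reads the left end-marker and one input symbol, its state is in general a \emph{mixed} state (the model applies superoperators and may measure), not a superposition one can simply declare to be the new initial state, and a QFA must start from the pure basis state $\ket{q_1}$. The fix is to fold the quotiented symbol's superoperator into the $\cent$-transition, i.e.\ set $\mathcal{E}'_{\mbox{\textcent}}=\mathcal{E}_{\sigma}\circ\mathcal{E}_{\mbox{\textcent}}$, and to treat separately the degenerate case where the original machine already accepts with positive probability during those first two steps (in which case the quotient is $\Sigma^{*}$). With that repair, and a line proving intersection-with-regular closure by the usual tensor product with a DFA, your proof is complete; as written, these two lemmas are asserted rather than proved, which is the only gap between your proposal and a finished argument.
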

\begin{proof}	
	Recall the language $ L_{eq\overline{eq}}  = \{ aw_{1} \cup bw_{2} \mid w_{1} \in L_{eq}, w_{2} \in \overline{L_{eq}} \} $ from Theorem \ref{thm:2posqbeats1q}. $ L_{eq\overline{eq}}$ is not a member of $ \mathsf{NQAL} $ $ \cup $ $ \mathsf{coNQAL}  $ \cite{YS10A}, which equals $ \mathsf{\frak{R}PostEQAL} $ by Theorem \ref{thm:nuncon}.

We know that $ \overline{L_{eq}} $ is a member of  $ \mathsf{PostRQAL} $ (\cite{YS10B} and Theorem \ref{thm:posres}).
Let $ \mathcal{M}_1 $  be the corresponding machine. We first build a QFAP called $\mathcal{M}_2 $ by
converting all postselection accept states of $ \mathcal{M}_1 $ to postselection reject states, and setting all remaining states of $ \mathcal{M}_1 $ to be nonpostselection halting states.

Based on these machines, we construct a RQFAP $ \mathcal{R} $
recognizing $ L_{eq\overline{eq}} $ with one-sided bounded error, as follows:

If the input length is less than two, $ \mathcal{R} $ gives memorized answers.
Therefore, we assume that length of the input is greater than 1 in the remainder.
Let $ u $ be the postfix of the input obtained by deleting the first symbol.
If the first input symbol is a $ b $, 
$ \mathcal{R} $ passes control to $ \mathcal{M}_1 $.
Thus if $ u \in \overline{L_{eq}} $, the input is accepted with a probability greater than $ \frac{1}{2} $,
and if $ u \notin \overline{L_{eq}} $, the input is  rejected with certainty.

If the first input symbol is an $ a $, 
$ \mathcal{R} $ passes control to $ \mathcal{M}_2 $.
Thus, if $ u \notin L_{eq} $, then the input is  rejected with certainty, and
if $ u \in L_{eq} $, the computation ends in some nonpostselection halting states 
with probability 1. Therefore, the decision is given by the value of $ \chi $.
So, if we set the $ \chi $ of $ \mathcal{R} $ to ``$ A $", $ L_{eq\overline{eq}} $ is recognized with one-sided bounded error, as required.
\end{proof}

\begin{theorem}
	\label{thm:RQFAP-closure}
	$ \mathsf{\frak{R}PostBQAL} $ is closed under complementation.
\end{theorem}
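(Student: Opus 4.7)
The plan is to mimic the standard closure-under-complementation construction for ordinary QFAPs (compare Theorem~\ref{thm:rt-post-closure} and the opening paragraph of its proof), but with the additional bookkeeping step of flipping the Riga parameter $\chi$. Given a language $L \in \mathsf{\frak{R}PostBQAL}$ recognized by a RQFAP $\mathcal{M}$ with error bound $\epsilon < \tfrac{1}{2}$ and parameter $\chi \in \{A,R\}$, I would build a new RQFAP $\mathcal{M}'$ that is structurally identical to $\mathcal{M}$ (same state set, alphabet, transition operators, initial state, and partitioning of $Q$ into continuing, postselection, and nonpostselection halting states), differing only in that the sets $Q_{pa}$ and $Q_{pr}$ are interchanged and $\chi$ is toggled from $A$ to $R$ (or from $R$ to $A$).

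The verification splits into two cases according to whether the total postselection probability on input $w$ is nonzero or zero. Since the transition operators are unchanged, the set of inputs with zero postselection probability is the same for $\mathcal{M}$ and $\mathcal{M}'$. In the nonzero case, the swap of $Q_{pa}$ and $Q_{pr}$ gives $p^{a}_{\mathcal{M}'}(w) = p^{r}_{\mathcal{M}}(w)$ and $p^{r}_{\mathcal{M}'}(w) = p^{a}_{\mathcal{M}}(w)$, so by the normalization formulas~\eqref{eq:postacc} and~\eqref{eq:postrej},
\[
\mathrm{P}(\mathcal{M}' \text{ accepts } w) = 1 - \mathrm{P}(\mathcal{M} \text{ accepts } w),
\]
which shows that $\mathcal{M}'$ recognizes $\overline{L}$ with the same error bound $\epsilon$ on these inputs. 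In the zero-postselection case, toggling $\chi$ converts a deterministic accept (probability $1$) into a deterministic reject and vice versa, again matching the desired membership in $\overline{L}$.

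There is no substantial obstacle here, since the construction does not require any modification of the transition operators and hence inherits the well-formedness of $\mathcal{M}$ automatically. The only point that demands a moment of care is verifying that toggling $\chi$ is the correct response to swapping the postselection labels on the remaining inputs, and this follows immediately from the definition of RQFAPs recalled in Section~\ref{sec:LPostFA}. Thus $\overline{L} \in \mathsf{\frak{R}PostBQAL}$, completing the proof.
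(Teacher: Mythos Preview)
Your proposal is correct and takes essentially the same approach as the paper's proof: swap the postselection accept and reject state sets and toggle $\chi$. The paper states this more tersely, without spelling out the two-case verification, but the argument is identical.
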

\begin{proof}
	If a language is recognized by a RQFAP with bounded error,
	by swapping the accepting and rejecting postselection states and
	by setting $ \chi $ to $ \{A,R\} \setminus \chi $, we obtain a new RQFAP
	recognizing the complement of the language with bounded error.
	Therefore, $ \mathsf{\frak{R}PostBQAL} $ is closed under complementation.
\end{proof}

\begin{theorem}
	\label{thm:RPostBQAL-subset-uQAL}
	$ \mathsf{\frak{R}PostBQAL} $ $ \subseteq $ $ \mathsf{uQAL}=\mathsf{uS} $.
\end{theorem}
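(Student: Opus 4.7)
The plan is to reuse the construction from the proof of Theorem \ref{thm:PostQ-subset-Q} and then dispatch on the two possible values of $\chi$. Given a RQFAP $\mathcal{M}$ with error bound $\epsilon < \tfrac{1}{2}$ recognizing $L$, I would form a standard real-time QFA $\mathcal{M}'$ by keeping the postselection accept (resp.\ reject) states of $\mathcal{M}$ as accept (resp.\ reject) states, and replacing every transition into a nonpostselection halting state at the end of the input by two amplitude-$\tfrac{1}{\sqrt{2}}$ transitions into fresh accept and reject states. A direct computation then gives
\[
P(\mathcal{M}'\text{ accepts }w) \;=\; \tfrac{1}{2} + \tfrac{1}{2}\bigl(p_{\mathcal{M}}^{a}(w) - p_{\mathcal{M}}^{r}(w)\bigr).
\]

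The central step is then to inspect when this quantity strictly exceeds $\tfrac{1}{2}$. For every input $w$ with $p_{\mathcal{M}}^{a}(w) + p_{\mathcal{M}}^{r}(w) > 0$, Lemma \ref{lem:bounded-error} and the assumption $\epsilon < \tfrac{1}{2}$ together force $p_{\mathcal{M}}^{a}(w) > p_{\mathcal{M}}^{r}(w)$ when $w \in L$ and the reverse strict inequality otherwise, so on every such non-boundary input the strict cutpoint $\tfrac{1}{2}$ is crossed in the desired direction. On the remaining ``boundary'' inputs, where $p_{\mathcal{M}}^{a}(w) + p_{\mathcal{M}}^{r}(w) = 0$, the accept probability of $\mathcal{M}'$ is exactly $\tfrac{1}{2}$, so the strict cutpoint is not crossed; the membership of those inputs in $L$ is dictated instead by the value of $\chi$.

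The argument now splits on $\chi$. If $\chi = R$, every boundary input is rejected by $\mathcal{M}$, hence lies outside $L$, and the cutpoint-$\tfrac{1}{2}$ language of $\mathcal{M}'$ coincides exactly with $L$, yielding $L \in \mathsf{QAL}$. If $\chi = A$, every boundary input lies in $L$, and exchanging the accept and reject labels of $\mathcal{M}'$ produces a real-time QFA whose acceptance probability equals $\tfrac{1}{2} + \tfrac{1}{2}\bigl(p_{\mathcal{M}}^{r}(w) - p_{\mathcal{M}}^{a}(w)\bigr)$; the same case analysis shows that its cutpoint-$\tfrac{1}{2}$ language is exactly $\overline{L}$, so $\overline{L} \in \mathsf{QAL}$ and therefore $L \in \mathsf{coQAL}$. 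Either way $L \in \mathsf{QAL} \cup \mathsf{coQAL} = \mathsf{uQAL}$, which is the claim.

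I do not expect any serious obstacle: the construction is almost verbatim that of Theorem \ref{thm:PostQ-subset-Q}, and the only genuinely new observation is that the strictness of the cutpoint in the definition of $\mathsf{QAL}$ is precisely what lets the boundary inputs fall on whichever side of $\tfrac{1}{2}$ matches $\chi$, without any further perturbation of the machine. The one point requiring a moment of care is checking that the bounded-error condition on $\mathcal{M}$ really yields strict separation between $p_{\mathcal{M}}^{a}(w)$ and $p_{\mathcal{M}}^{r}(w)$ on non-boundary inputs, which is immediate from Lemma \ref{lem:bounded-error}.
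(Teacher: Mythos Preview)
Your proposal is correct and follows essentially the same approach as the paper: both reuse the construction of Theorem~\ref{thm:PostQ-subset-Q} and then branch on $\chi$, with $\chi=R$ yielding $L\in\mathsf{QAL}$ via strict cutpoint and $\chi=A$ yielding $L\in\mathsf{coQAL}$. The only cosmetic difference is that the paper phrases the $\chi=A$ case as ``recognition with nonstrict cutpoint'' of $L$ (which is equivalent to $L\in\mathsf{coQAL}$), whereas you explicitly swap the accept/reject labels to exhibit $\overline{L}\in\mathsf{QAL}$; these are the same observation.
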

\begin{proof}
The equality has been shown in \cite{YS11A}.
	The rest of the proof is similar to the proof of Theorem \ref{thm:PostQ-subset-Q}, with the exception that
	\begin{itemize}
		\item if $ \chi = A $, we have recognition with nonstrict cutpoint;
		\item if $ \chi = R $, we have recognition with strict cutpoint.
	\end{itemize}
\end{proof}

It was shown in \cite{DF10} that the language $ L_{say} $, i.e.
$
	\{ w  \mid \exists u_{1},u_{2},v_{1},v_{2} \in \{a,b\}^{*},
			 w = u_{1}bu_{2} = v_{1}bv_{2}, |u_{1}| = |v_{2}| \},
$
cannot be recognized by a RQFAP. 
Since $ L_{say} \notin \mathsf{uS} $ \cite{FYS10A}, the same result follows easily from Theorem 
\ref{thm:RPostBQAL-subset-uQAL}.

\bibliographystyle{alpha}
\bibliography{YakaryilmazSay}

\end{document}